\documentclass[submission,copyright,creativecommons]{eptcs}
\providecommand{\event}{LSFA 2026}

\usepackage{iftex}

\usepackage{amssymb}
\usepackage{amsmath}
\usepackage{amsthm}

\usepackage{bbm}
\usepackage{url}
\usepackage{hyperref}

\usepackage{ wasysym }
\usepackage{ dsfont }
\usepackage{wasysym}

\usepackage{tikz}
\usetikzlibrary{shapes,arrows}

\usepackage{graphicx}

\usepackage{xcolor}

\usepackage{amssymb}
\usepackage{bbm}
\usepackage{url}
\usepackage{hyperref}
\usepackage{scalerel}
\usepackage{mathtools}
\usepackage{csquotes}

\newcommand{\R}{\mathcal{R}}
\newcommand{\Tau}{\mathcal{T}}

\newcommand{\D}{\mathcal{D}}

\newcommand{\B}{\mathcal{B}}

\newcommand{\M}{\mathcal{M}}
\newcommand{\brd}{\Rrightarrow}
\newcommand{\logic}{$\Re$}

\newcommand{\ro}{\mathfrak{r}}

\newtheorem{definition}{Definition}[section]
\newtheorem{theorem}[definition]{Theorem}
\newtheorem{lemma}[definition]{Lemma}

\newtheorem{example}[definition]{Example}

\newtheorem{remark}[definition]{Remark}

\usepackage{iftex}

\ifpdf
  \usepackage{underscore}
  \usepackage[T1]{fontenc}
\else
  \usepackage{breakurl}
\fi

\title{Robust Classification in ML: \\
A Topological Semantics Approach}
\author{ Dominik Pichler
    \institute{TU Wien}
    \email{dominik@logic.at}
\and
Mirko Tagliaferri
    \institute{TU Wien}
    \email{mirko@logic.at} }

 \def\titlerunning{Robust Classification in ML: A Topological Semantics Approach}
 \def\authorrunning{D. Pichler, M. Tagliaferri}

\begin{document}
\maketitle

\begin{abstract}
Robust classification is commonly understood as the stability of a classifier under small perturbations (often adversarial) of input data. In this paper, we propose a logical framework for robust classification grounded in topological semantics for modal logic. Evaluation points are feature vectors representing machine-readable objects, and formulas express explicit classifications. Robustness is interpreted geometrically as local truth persistence: a classification is robust at a point if it holds throughout some non-empty open neighbourhood of that point. Building on this perspective, we introduce a logical language with a robustness modality interpreted over S4 topological spaces, together with a robustness-sensitive conditional connective. This conditional connective captures global inclusion relations between robust regions and other properties of the classifier: it holds at a point when the neighbourhood witnessing the robustness of one formula is contained in the truth set of another. In this way, robust classifications can be systematically linked to classification conditions. We provide a sound and complete axiomatisation of the resulting logic. Finally, we introduce Minimal Robust Models, a constructive method for generating models from specified robustness constraints, which yields formal tools for analysing, explaining, and structuring robust classification behaviour.
\end{abstract}

\section{Introduction}

Among the applications of modal logic, an important one is representing and reasoning about spatial phenomena and the structural properties of spaces. Such an application is made possible thanks to the versatility of modal logic, which is interpretable by a variety of semantic structures, among which topological semantics occupy a prominent position. Introduced in the seminal work of McKinsey and Tarski \cite{McKinseyTarsi44}, topological spaces can be used to interpret modal operators, with the necessity (or $\Box$) operator corresponding to the interior operator. This interpretation offers a geometric reading of the modality, where a formula $\square\varphi$ is true at an evaluation point $s$ whenever $\varphi$ holds throughout some open neighbourhood of that point $s$.

Since \cite{McKinseyTarsi44}, topological semantics has become a central topic in modal logic, connecting spatial reasoning, algebraic semantics and dynamic systems \cite{vanBenthem2007, BlackburnRijkeVenema2001}. In this paper, we aim to continue this tradition by developing a logical framework that exploits the connection between logical modalities and the geometric properties of a space to reason about robust classification in machine learning \cite{Freiesleben2023-yj}.

Machine learning classifiers typically operate by mapping computational representations of objects to discrete labels. These representations interpret objects as points in a high-dimensional feature space, where classification decisions depend on the geometric arrangement of those points. However, it is known that classifiers can be susceptible to input perturbations, in which slight changes to an image or data point can lead to radically different predictions, a phenomenon widely discussed in the literature on adversarial machine learning \cite{8294186,Hosseini18,Meng22}. From a geometric perspective, such classification failures occur when the classifier's decision boundary is too close to the object's representation boundary, making classification unstable to small perturbations in the input.

Topological semantics provides a natural formal language for expressing a notion of local stability of classifications. If feature vectors are interpreted as points of a topological space, then neighbourhoods correspond to perturbations of the underlying object representation. Under this interpretation, the modal operator $\square$ captures the idea that a classification holds throughout some neighbourhood of an evaluation point (a specific vector representation), and thus remains stable under perturbations.

Building on this idea, we propose a modal language in which formulas express explicit classifications and feature properties of objects represented by the vectors. Robust classification is then captured by a robust modality with an added conditional connective that relates robust classification to further classification properties of a given object. Intuitively, the connective will express that the neighbourhood witnessing the robustness of one property lies entirely within the region satisfying another property.

The paper is structured as follows: in section \ref{sec: mot}, we provide motivations behind our proposal as well as providing a first introduction to our methodological approach; in section \ref{sec: rel}, we discuss related works; in section \ref{sec: interp}, we discuss the main interpretation elements of our logic; in section \ref{sec: logic}, we introduce our logic; in section \ref{section: robust models}, we discuss how to construct minimal models; in section \ref{sec: example}, we provide some examples of how we expect our logic to be used; conclusions and future works follow.

\section{Motivation and Methodology}\label{sec: mot}
Machine learning systems classify complex objects (such as images or text) by transforming them into feature vectors, i.e., tuples of measurable properties. These vectors form a feature space in which classifiers learn decision boundaries between categories \cite{Bishop2006}. Formally, a classifier can be viewed as a function $f: X\rightarrow C$, where $X$ is the feature vector space and $C$ is a set of classification labels. Feature vectors represent the model's representation of objects, while classification labels correspond to the externally visible categories.

This distinction is important for explainability in machine learning. A key aim of explainable AI is to relate classification outputs to interpretable properties of inputs. Our goal is to provide a modal logic framework that formally expresses and analyses such relationships, as well as an explicit procedure which guides the construction of the semantics used to interpret the formulas of our language. In our approach, formulas represent external classifications, while semantics are defined over feature vectors. Using topological semantics, robustness of classification can be interpreted as neighbourhood-based stability in the feature space.

This notion of robust classification is a central concern in machine learning. Ideally, objects with slightly different representations should receive the same classification. However, many models exhibit fragile behaviour where small perturbations in input lead to different predictions. We interpret robustness geometrically as local invariance: a classification is robust at a feature vector if it holds throughout a neighbourhood around that vector.

Formally, our logic includes a modal operator representing robust classification, $\R$, and introduces a conditional connective, $\brd$, that links robust classification regions with other properties of the feature space. This framework provides formal tools for analysing and structuring explanations of classifier behaviour while highlighting structural properties of robust classification.

Our methodological approach combines: (i) a clear modelling assumption about how robustness arises from neighbourhood stability, (ii) a training strategy that uses refined labels to induce robust and non‑robust regions, thus providing indications on how the logic's semantic structures might be obtained, and (iii) a formal language for analysing the resulting structure.

\section{Related Works}\label{sec: rel}

The idea of intersecting topology with modal logic is not new \cite{McKinseyTarsi44}. This approach sparked very interesting results, both technical and conceptual, allowing the formalisation of various notions. A clear example of this is \cite{Baltag2022}, where the authors used topological spaces and interiors to represent epistemic concepts such as knowledge and beliefs. Specifically, their topological semantics is used to represent evidence, justifications and, ultimately, beliefs. Moreover, like our approach, they include a conditional belief operator to reason about belief dependencies. Beyond the conceptual differences between our approach and theirs (we model ML classification, where they model evidence and beliefs), from a technical point of view, our conditional connective generalises the one they propose. In particular, their conditional belief operator will hold in our models whenever classifications are assumed to be always robust (i.e., there are no borderline cases). Moreover, we provide a direct completeness proof for the axiomatisation of the language in which the robustness-sensitive conditional is taken as primitive.

Several other logics discuss the use of topologically defined modal operators, including universal modalities to reason about global properties of spaces (akin to our $\brd$) \cite{GorankoPassy1992,Shehtman99,Kontchakov08}. These logics are as expressive as our logic, but, as we will describe in detail in section \ref{sec:similarities}, our contribution is to isolate a robustness‑oriented conditional and to connect it to machine‑learning practice. We also differ from general region‑based spatial logics, such as the Region Connection Calculus \cite{RandellCuiCohn92}, which reason about adjacency and overlap of regions but do not distinguish between stable and unstable instances within a region.

From an interpretative point of view, to the best of our knowledge, our approach is similar to two research papers. In \cite{Kawamoto2021-di}, the author proposes a modal logic for the formal specification of various statistical properties of supervised machine learning models. These properties include classification performance (e.g., precision, recall, accuracy), robustness against adversarial inputs, and fairness notions (e.g., independence, separation, sufficiency). In this logic, formulas represent properties of machine learning models, while the semantics' states are interpreted as datasets, with accessibility relations between states representing transformations applied to those datasets. This approach is similar in spirit to ours, in which data representations dominate the logic's semantics, while syntax is devoted to representing data features. However, unlike the cited work, for us, states will stand for vector representations of single data points rather than full datasets. Moreover, instead of focusing on statistical transformations of datasets, we will focus on geometrical transformations of the feature vectors.

The second paper, \cite{Hornischer2026}, should be considered our main source of inspiration, as it employs the same core ideas we present here. In the paper, a modal logic framework is proposed to analyse the notions of robustness and trustworthiness of AI models from a formal epistemological perspective. The language proposed by the author is used to express the behavioural properties of an AI model, and its semantics is defined over states representing possible inputs to the model (in a more abstract way than ours, but basically aligning with our proposal). The modal operator then captures robustness conditions relating local robustness properties and a more global notion of trustworthy behaviour. While this approach, like ours, uses modal logic to reason about robustness in machine learning systems, the interpretative focus differs. \cite{Hornischer2026}'s framework aims to analyse epistemological constraints on robustness and trustworthiness of model behaviour, while our goal is to provide a sound and complete logic for reasoning about robust classification in the geometric structure of feature spaces. Moreover, although the author presents a topological semantics that captures a non-uniform notion of robustness equivalent to ours, he does so by relying on Euclidean topologies, thereby obtaining trivialisation results that label too many machine learning models as providing robust classifications. In this sense, in our paper, we follow one of his hints, pointing to the idea that similarity between images based on human cognition has an inherently different structure from norm-based structures. We will propose a procedure based on human cognition to foster the creation of classification spaces aligned with human perceptions of robustness.

Finally, our work complements a substantial body of research on adversarial robustness in machine learning. Adversarial examples demonstrate that many classifiers are sensitive to small perturbations in the input \cite{Goodfellow15}; training methods such as adversarial training \cite{Madry18} aim to reduce this vulnerability, and verification tools aim to certify robustness \cite{Katz17,Huang17,Seshia22}. These approaches typically compute, or approximate robust regions post‑training, and do not provide a language for specifying or reasoning about their structure. Our logic fills this gap by offering a formal apparatus for describing robust regions, stating containment relations, and reasoning about classification behaviour under perturbations.

\section{Interpretation}\label{sec: interp}
We will now briefly explain our interpretation of the formal elements introduced in the next section.

Machine learning classifiers operate on feature-based representations of objects, where a feature is a measurable attribute used for prediction (e.g., pixel values in an image, word occurrences in a document, or sensor readings). Let $F$ be a finite set of features, where each feature $f\in F$ has a domain of possible values $D_f$. Assigning values to all features yields a feature vector $s$ that represents a possible object.\footnote{Although the term 'feature vector' can be used (especially in deep learning) to refer to any representation given by the machine learning system, including representation in inner layers, we will use the term as a synonym to 'input vector', thus referring to the initial representation of the object under analysis.} The set of all such vectors forms a feature vector space $X$, which is the semantic domain of our models.

Classification labels correspond to the categories predicted by the model (e.g., cat, dog, spam). In our logic, such basic classification properties are represented by atomic formulas  $p, q, \dots$. The interpretation of an atom consists of all feature vectors for which the classification holds. More complex formulas describe combinations or relationships between classification properties. Then, the satisfaction relation $s\models p$ will indicate that the feature vector $s$ is classified as $p$.

To capture robustness, we add a topological structure to the feature vector space. Intuitively, two vectors are close if they differ only slightly in their feature values, where we rely on an intuitive notion of slight difference, rather than a numerical threshold.\footnote{Formalising such a concept would require a paper in itself, and we will leave it for future work.} Neighbourhoods represent small perturbations of object representations. For each atomic classification $p$, we consider a robust classification set $\ro(p)$, consisting of feature vectors whose classification remains unchanged under small perturbations. The choice of topology depends on the application domain and may often be derived from an underlying metric on the feature space, as suggested in \cite{Hornischer2026}. In this paper, however, we propose a mechanism that could enable the machine learning system to derive the topology autonomously by first using a better supervised training dataset and then building the minimal topology based on the learned robustness constraints (see section \ref{section: robust models}). The main idea behind this approach will be that a modeller might be interested that the classifier displays robust classification behaviour over some specific input data (e.g., it might judge that clear stop signs must be recognised as such), but might be willing to accept error over unclear or non-critical data points (e.g., it might not care whether the classifier makes mistakes over parking signs or struggle to recognise signs in extremely poor weather conditions).

Note that our framework abstracts from the classifier's internal architecture. The logic describes the behavioural relationship between feature vectors and classification properties rather than the computational mechanism that produces the classification, allowing the framework to apply to a wide range of models, including neural networks and decision trees.

\section{Logic}\label{sec: logic}

\subsection{Syntax and Semantics}

The language $\mathcal{L}$ of \logic ~ consists of a countable set $At$ of atomic formulas or atoms (ranging over $p, q, \dots$), the connectives $\wedge$ and $\neg$ of classical logic (the other propositional connectives are defined as usual), the binary operator $\brd$, and the unary operator $\R$ for robustness. $\mathcal{L}$ is defined by the following grammar:
\begin{equation*}
\varphi ::=  p \mid \neg \varphi \mid \varphi \wedge \varphi \mid \R(\varphi) \mid \varphi \brd \varphi
\end{equation*}

The formula $\R(\varphi)$ is read as ``$\varphi$ robustly holds,'' and $\varphi \brd \psi$ is read as ``$\varphi$ robustly implies $\psi$''.

\begin{definition}[Abbreviations]
\[
\bot:= (\varphi\land\neg\varphi) \quad \text{and} \quad \top:= \neg\bot \quad \text{and} \quad \Box \varphi := \top \brd \varphi \quad \text{and} \quad \lozenge \varphi := \neg \Box \neg \varphi = \neg(\top \brd \neg \varphi)
\]
\end{definition}

As seen later in this paper, the derived operators $\Box$ and $\lozenge$ will behave like global \textbf{S5}-modalities in \logic.

\medskip

We now define the pre-models, called \emph{labelled vector space}, which consist of a non-empty set of states (our feature vectors) and a valuation function.\footnote{We will use the term ``state'' instead of ``feature vector'' to align with the standard terminology of modal logic.}

\begin{definition}[Pre-semantics]
Let $X\not=\emptyset$ be a feature vector space (i.e., a set of feature vectors), and let $v: At \to \mathcal{P}(X)$ be a valuation function. The pair $(X, v)$ is called a \emph{labelled vector space}. For $s \in X$, the satisfaction relation $X,s\models \varphi$ (to be read as ``the classification property $\varphi$ holds for the feature vector $s$ in the feature vector space $X$'') is defined inductively:
\begin{align*}
X, s \models p &\iff s \in v(p) \\
X, s \models \neg \varphi &\iff X, s \not\models \varphi \\
X, s \models \varphi \wedge \psi &\iff X, s \models \varphi \text{ and } X, s \models \psi
\end{align*}
\end{definition}

To capture robust classification, we introduce a topological structure into our semantics.

\begin{definition}[Robust Classification Space]
A pair $(X, \Tau)$ is a \emph{robust classification space} if $\Tau \subseteq \mathcal{P}(X)$ satisfies:
\begin{itemize}
    \item[$\text{(i)}$] $\emptyset, X \in \Tau$
    \item[$\text{(ii)}$] $O_1, O_2 \in \Tau \Rightarrow O_1 \cap O_2 \in \Tau$
    \item[$\text{(iii)}$] $\{O_i\}_{i \in I} \subseteq \Tau \Rightarrow \bigcup_{i \in I} O_i \in \Tau$
\end{itemize}
The elements of $\Tau$ are called \emph{robust classification sets}.\footnote{It should be noted that we are not introducing anything new at this stage. We used the terms robust classification space and robust classification sets only to align with our terminology, but these are traditional topological spaces and open sets.}
\end{definition}

Given a labelled vector space $(X, v)$ and a classification property $\varphi$, we want the feature vectors for which the classification $\varphi$ is robust to be those that are surrounded by other feature vectors classified as $\varphi$. At this stage, we assume that the regions of the feature vector space in which robust classifications occur have already been specified (although see subsection \ref{section: robust models} for a description on how to construct them). Thus, we assume a robust classification space has already been defined and that we know which sets constitute robust classification sets. Using the available robust classification space, we can represent the sets of feature vectors for which a formula $\varphi$ is robustly classified via the interior operator.

\begin{definition}[Interior]
Let $(X, \Tau)$ be a robust classification space. For any $A \subseteq X$, the \emph{interior} of $A$ is:
\[
\mathrm{int}(A) := \bigcup \{ O \in \Tau : O \subseteq A \}
\]
That is, $\mathrm{int}(A)$ is the largest robust classification set contained within $A$. 
\end{definition}

    We will use the following well-known properties of the interior operator throughout this paper.

\begin{remark}[Properties of the Interior Operator]

Let $(X, \Tau)$ be a robust classification space. For all $A, B \subseteq X$: $\mathrm{int}(A) \subseteq A$, $\mathrm{int}(\mathrm{int}(A)) = \mathrm{int}(A)$, $\mathrm{int}(A \cap B) = \mathrm{int}(A) \cap \mathrm{int}(B)$ and $A \subseteq B$ implies $\mathrm{int}(A) \subseteq \mathrm{int}(B)$. 

\end{remark}

A state $s$ is contained in the interior of a set $A$ if there exists a robust classification set $O\in \Tau$ such that $s \in O \subseteq A$. In other words, if a state is contained in the interior of the set of all $\varphi$ states, then it is surrounded by only $\varphi$ states.

We are now going to define the rest of our semantics.

\begin{definition}[Model]
Let $(X, \Tau)$ be a robust classification space and $v: At \to \mathcal{P}(X)$ a valuation function. We define a \emph{(classification) model} as $\mathcal{M} := (X, \Tau, v)$.
\end{definition}

\begin{definition}\label{def:semantics}(Full Semantics)
Let $\mathcal{M} = (X, \Tau, v)$ be a model and $s \in X$. Our satisfaction relation is defined as follows, where $||\varphi||^{\mathcal{M}} := \{ s \in X : \M, s \models \varphi \}$:
\begin{align*}
\M, s &\models p \iff s \in v(p) \\
\M, s &\models \neg \varphi \iff \M, s \not\models \varphi \\
\M, s &\models \varphi \wedge \psi \iff \M, s \models \varphi \text{ and } \M, s \models \psi \\
\M, s &\models \R \varphi \iff s \in \mathrm{int}(|| \varphi ||^{\mathcal{M}}) \\
\M, s &\models \varphi \brd \psi \iff \emptyset \neq \mathrm{int}(|| \varphi ||^{\mathcal{M}}) \subseteq || \psi ||^{\mathcal{M}}
\end{align*}

\end{definition}

Semantic consequence (\(\Phi \models \varphi\)) and validity (\(\models \varphi\)) are defined as usual. \\

Before delving further into the properties of our models, we provide some informal intuition and observations. Given a classification $\psi$, we can see that $\M, s \models \psi$ in general does not imply $\M, s \models \R\psi$. In such a case, $s$ is found at the so-called classification border of $|| \psi ||^{\mathcal{M}}$. Being a borderline case means that any robust classification set $O \in \Tau$ containing $s$ intersects $||\neg \psi ||^{\mathcal{M}}$, in other words, $s$ is very close to not being classified as a $\psi$. Although the classifier classified the feature vector $s$ as a $\psi$, such classification was not robust. For example, if we take $\psi$ to mean ``it is a 5'', and the feature vector $s$ represents Figure \ref{fig:MNIST5},\footnote{Figure \ref{fig:MNIST5} is taken from the MNIST database, which is a well-known database used to train machine learning systems to classify handwritten digits. Note also that, although in our logic such an image would be represented as a feature vector rather than a 2-dimensional image, the vector representation of the image would not be easily interpretable by a human, so we choose to present the example with the given figure instead.} we might be inclined to say that although the classification is correct, even small changes to such an image might modify the classification, e.g., making it be classified as a 3.

\begin{figure}[h]
    \centering
    \includegraphics[scale=0.3]{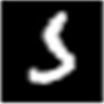}
    \caption{MNIST example of the digit 5.}
    \label{fig:MNIST5}
\end{figure}

Concerning our $\brd$ connective, we are interested in seeing what classification information can be extracted from the fact that a robust classification already took place. For example, imagine that $\varphi$ stands for ``it is an 8'', while $\chi$ stands for ``it contains two clearly defined circles'', we would want to say that any robust classification of $\varphi$ implies $\chi$ ($\varphi\brd\chi$). In other words, any feature vector that is robustly classified as an 8 should also validate the property that there are two clearly defined circles. However, $\varphi\rightarrow\chi$ might still not hold, since there might be borderline classification of the feature vector as an 8 that does not contain two clearly defined circles (e.g., one of the circles might not be fully closed).

Now we take a closer technical look at how the semantics behave. First, the operator $\R$ is an \textbf{S4}-modality and satisfies axioms \textbf{K}, \textbf{T}, and \textbf{4}, as shown in prior work on topological semantics \cite{McKinseyTarsi44}. Next, the derived operator $\Box$ is a global \textbf{S5}-modality as seen in the next lemma. Later in the paper, we also see that the axioms \textbf{K}, \textbf{T}, and \textbf{5} are derivable for $\Box$ in the associated proof system.

\begin{lemma}\label{lem: universal}
    Given a model $\M=(X,\Tau,v)$ and a state $s \in X$:

    \begin{enumerate}
        \item $\M, s \models \Box \varphi$ iff $\forall s\in X: \M, s \models \varphi$
        \item $\M, s \models \lozenge \varphi$ iff $\exists s\in X: \M, s \models \varphi$
    \end{enumerate}

\end{lemma}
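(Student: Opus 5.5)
Both parts follow by unfolding the abbreviations $\Box\varphi := \top\brd\varphi$ and $\lozenge\varphi := \neg\Box\neg\varphi$ and applying the clause for $\brd$ in Definition~\ref{def:semantics}. (In the statement the bound variable $s$ clashes with the fixed state $s$; I read the right-hand sides as $\forall t\in X$ and $\exists t\in X$.)

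For item 1, first compute $||\top||^{\M}$. Since $\bot = \chi\wedge\neg\chi$ is false at every state, $\top=\neg\bot$ is true at every state, so $||\top||^{\M}=X$. Condition (i) of a robust classification space gives $X\in\Tau$, hence $\mathrm{int}(X)=X$. Because the paper assumes $X\neq\emptyset$, we have $\emptyset\neq\mathrm{int}(||\top||^{\M})$ in every model, so the non-emptiness conjunct of the $\brd$ clause holds automatically. The clause then reduces to $X\subseteq||\varphi||^{\M}$, i.e.\ $\M,t\models\varphi$ for all $t\in X$. This condition does not depend on $s$, which is what makes $\Box$ a global modality.

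For item 2, write $\M,s\models\lozenge\varphi$ as $\M,s\not\models\Box\neg\varphi$. By item 1 this says that it is not the case that every $t\in X$ satisfies $\neg\varphi$, which is equivalent to some $t\in X$ satisfying $\varphi$.

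The only step needing care is the non-emptiness conjunct in the $\brd$ clause. It is exactly where the standing assumption $X\neq\emptyset$ and the openness of $X$ are used: without them, $\top\brd\varphi$ could fail even when $\varphi$ holds everywhere. Everything else is routine unfolding of definitions.
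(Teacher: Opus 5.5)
Your proposal is correct and follows essentially the same route as the paper: unfold $\Box\varphi$ as $\top\brd\varphi$, use $\mathrm{int}(||\top||^{\M})=\mathrm{int}(X)=X$ to reduce the clause to $||\varphi||^{\M}=X$, appeal to $X\neq\emptyset$ for the non-emptiness conjunct in the converse direction, and derive item 2 from item 1 by duality. Renaming the bound variable to $t$ is a harmless clarification of the statement's notation.
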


\begin{proof}
1) If $\M, s \models \Box \varphi$ holds then by definition $int(||\top||^\M) \subseteq ||\varphi||^\M$. Since $int(||\top||^\M)=int(X)=X$ we get that $||\varphi||^\M = X$. The other direction follows from the fact that $X$ is non-empty. 2) The claim for $\lozenge$ follows directly from 1).
\end{proof}

\subsection{The semantics under the microscope}\label{sec:similarities}

In this section, we discuss the similarities and differences between our semantics and other approaches in topological semantics and modal logic.

Our language contains the \textbf{S4}-modality $\R$, interpreted as the interior operator, and the connective $\brd$, both as primitive operators. Lemma~\ref{lem: universal} shows that, in our semantics, the universal \textbf{S5}-modality is definable by $\Box \varphi := \top \brd \varphi$. Conversely, if we replace $\brd$ with $\Box$ as a primitive operator, then $\brd$ can be expressed using the topological modality $\R$ together with the universal modality $\Box$ as follows:
$\varphi \brd \psi \equiv \lozenge \R \varphi \wedge \Box(\R \varphi \to \psi)$.
The first conjunct, $\lozenge \R \varphi$, says that the robust region of $\varphi$ is non-empty. The second conjunct, $\Box(\R \varphi \to \psi)$, says that every state belonging to the robust region of $\varphi$ satisfies $\psi$. Since
$||\R\varphi||^\M = int(||\varphi||^\M)$, this is equivalent to requiring that
$\emptyset \neq int(||\varphi||^\M) \subseteq ||\psi||^\M$,
which is exactly the satisfaction condition for $\varphi \brd \psi$. Thus, over non-empty topological models, the language with $\R$ and $\brd$ is definitionally equivalent to the standard topological language of \textbf{S4} equipped with the universal modality \cite{Lenzen1978,GorankoPassy1992}. The role of $\brd$ is therefore not to increase the expressive power of topological \textbf{S4} with the universal modality, but to isolate a robustness-sensitive pattern that is central for the intended application. The formula $\varphi \brd \psi$ packages two conditions: first, that $\varphi$ has a non-empty robust region, and second, that this robust region is globally included in the truth set of $\psi$. In the context of robust classification, this means that whenever $\varphi$ holds robustly at some point, every robust instance of $\varphi$ also satisfies the classification property $\psi$.

The non-emptiness condition is important. Without this semantic constraint, the operator would be equivalent to $\Box(\R\varphi \to \psi)$, which means that every formula with an empty interior would robustly imply every formula, because the inclusion
$int(||\varphi||^\M) \subseteq ||\psi||^\M$
would hold vacuously whenever
$int(||\varphi||^\M)=\emptyset$.
The conjunct $\lozenge \R\varphi$ prevents this vacuity. Thus $\varphi \brd \psi$ does not merely say that all robust $\varphi$-states are $\psi$-states; it also says that there exists at least one robust $\varphi$-state. Although inclusion-based readings of implication are standard in modal and topological semantics, this explicit non-vacuity requirement plays a distinctive role in the present setting, since it makes the connective sensitive to whether the antecedent has any genuine robust instances. This additional requirement also blocks the ``blow-up'' of material implication. Indeed, the negation $\neg(\bot \brd \varphi)$ is valid, since $int(\emptyset)=\emptyset$. As a consequence, the operator $\brd$ in general does \textbf{not} allow for strengthening of the antecedent: even if $\varphi \brd \psi$ holds and $\chi \to \varphi$ holds throughout the model, it need not follow that $\chi \brd \psi$ holds. More specifically, strengthening of the antecedent fails when the strengthened antecedent has an empty interior. In this sense, the non-empty-interior condition marks a point at which $\brd$ departs from pure containment conditionals and behaves more like a robustness-sensitive conditional: a conditional statement is licensed only when the antecedent has a non-empty region of stable verification.

\begin{example}
    Take a labelled vector space $(X,v)$ with $X=\{s_1, s_2\}$, $v(q)=\{s_1, s_2\}$ and $v(p)=\{s_1\}$. The model $\M:=(X,\Tau,v)$ consists of a trivial robust classification space $\Tau=\{\emptyset, X\}$. Then $\M, s_1 \models \Box (p \rightarrow q)$ and $\M, s_1 \models q \brd q$, but $\M, s_1 \not\models p \brd q$.
\end{example}

The operator $\brd$ validates several familiar cumulative principles from non-monotonic consequence relations and preference-based conditionals, where principles such as Cut and Cautious Monotonicity play a central role
\cite{Gabbay1985,KrausLehmannMagidor1990,Aqvist1987,Parent2015}. In particular, $\brd$ validates Right Weakening, And, Cautious Cut, and Cautious Monotonicity:

\begin{itemize}
    \item \textbf{Right Weakening / Weakening of the Consequent:} If $\models \psi \to \chi$, then
    $\models (\varphi \brd \psi) \to (\varphi \brd \chi)$.

    \item \textbf{Conjunction of Consequents:}
    $\models ((\varphi \brd \psi) \land (\varphi \brd \chi)) \to
    (\varphi \brd (\psi \land \chi))$.

    \item \textbf{Cautious Cut:}
    $\models ((\varphi \brd \psi) \land ((\varphi \wedge \psi) \brd \chi))
    \to (\varphi \brd \chi)$.

    \item \textbf{Cautious Monotonicity:}
    $\models ((\varphi \brd \psi) \land (\varphi \brd \chi))
    \to ((\varphi \land \psi)\brd \chi)$.
\end{itemize}

At the same time, the non-empty-interior requirement makes $\brd$ more restrictive than standard KLM-style consequence relations. As a consequence of the restriction placed on the antecedent, $\brd$ does not behave as a
monotonic conditional: it does not allow for arbitrary strengthening of the antecedent. Moreover, $\brd$ in general does not validate
\textbf{Identity}, i.e. $\varphi \brd \varphi$\footnote{This principle is
also known as Reflexivity.}, whenever
$int(||\varphi||^\M)=\emptyset$. Instead, it validates a
weaker non-vacuous form of Identity, namely
$\models \lozenge \R \varphi \to (\varphi \brd \varphi)$. In fact, the
converse also holds, so that
$\models (\varphi \brd \varphi) \leftrightarrow \lozenge \R \varphi$.
Thus, Identity is recovered exactly for formulas whose truth set has a
non-empty interior, i.e., for formulas with at least one robust instance.

\subsection{Axiomatisation}\label{sec:axio}

In this subsection, we present a complete set of axioms for our logic \logic ~ and demonstrate its soundness and completeness with respect to our classification models.

\subsubsection{Axioms}

\begin{align*}
\text{(CL)} \quad & \text{All instances of classical propositional logic} \\
\text{(K-R)} \quad & \R(\varphi \to \psi) \to (\R \varphi \to \R \psi) \\
\text{(4)} \quad & \R\varphi \to \R \R \varphi \\
\text{(K-RC)} \quad & (\varphi \brd (\psi \to \chi)) \to ((\varphi \brd \psi) \rightarrow (\varphi \brd \chi)) \\
\text{(Abs)} \quad & \lozenge (\varphi \brd \psi) \to \Box (\varphi \brd \psi) \\
\text{(RMP)} \quad & (\R \varphi \land (\varphi \brd \psi)) \to \psi \\
\text{(Dis)} \quad & (\varphi \brd \psi) \to (\varphi \brd \R \psi) \\
\text{(Mon)} \quad & \lozenge \R(\varphi \land \chi) \to ((\varphi \brd \psi) \to (\varphi \land \chi \brd \psi)) \\
\text{(NV)} \quad & (\varphi \brd \psi) \to \lozenge \R (\varphi)
\end{align*}

\subsubsection{Inference Rules}

\[
\textbf{MP}: \frac{\varphi, \quad \varphi \to \psi}{\psi}
\qquad
\textbf{RCI}: \frac{\varphi \to \psi}{\lozenge \R\varphi \to (\varphi \brd \psi)}
\qquad
\textbf{N}_{\R}: \frac{\varphi}{\R \varphi}
\]

\begin{definition}\label{def:derivation}  
A derivation of \(\varphi_n\) is a sequence \(\varphi_1, ..., \varphi_n\) where each \(\varphi_i\) is either an axiom instance or follows from the previous ones by applying one of the rules. We write \(\Gamma \vdash \varphi\) if there is a derivation of \(\varphi\) or if \(\psi_1 \land ... \land \psi_n \rightarrow \varphi\) for some \(\psi_1, ..., \psi_n \in \Gamma\).  
\end{definition}

We first note that $\Box$ and $\lozenge$ fulfil all the axioms of an S5 modality as well as the rule of necessitation for $\Box$. 

\begin{lemma}\label{lemma:boxdiamond}
The S5 axioms \textbf{K}, \textbf{T} and \textbf{5} for $\Box$ and $\lozenge$ are derivable in this system, as well as the necessitation rule for $\Box$.
\end{lemma}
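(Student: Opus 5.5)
The plan is to unfold $\Box\varphi := \top \brd \varphi$ and $\lozenge\varphi := \neg\Box\neg\varphi$ and obtain each S5 principle as an instance, or a short consequence, of one of the axioms for $\brd$ with antecedent $\top$. Every derivation stays at the level of propositional reasoning (CL and MP) applied to axiom instances. Two facts about $\top$ are needed throughout. First, $\vdash \R\top$, by $\textbf{N}_{\R}$ applied to the CL theorem $\top$. Second, $\vdash \lozenge\R\top$. I expect the second to be the only non-obvious step, since it is the non-vacuity premise required by RCI.

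\textbf{K} for $\Box$. The formula $\Box(\varphi\to\psi)\to(\Box\varphi\to\Box\psi)$ is literally the instance of (K-RC) with antecedent $\top$.

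\textbf{T} for $\Box$. Instantiate (RMP) to get $(\R\top \land (\top\brd\varphi))\to\varphi$. Combining this with $\vdash\R\top$ by CL and MP gives $\Box\varphi\to\varphi$.

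The key lemma $\vdash\lozenge\R\top$. By definition, $\lozenge\R\top$ is $\neg(\top\brd\neg\R\top)$. Instantiating (RMP) gives $(\R\top\land(\top\brd\neg\R\top))\to\neg\R\top$. Together with $\vdash \R\top$, propositional reasoning yields $\neg(\top\brd\neg\R\top)$, since assuming the conditional would produce both $\R\top$ and $\neg\R\top$. So $\vdash\lozenge\R\top$.

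Necessitation for $\Box$. From $\vdash\varphi$ we get $\vdash\top\to\varphi$ by CL. RCI then gives $\vdash\lozenge\R\top\to(\top\brd\varphi)$. MP with the key lemma yields $\vdash\Box\varphi$.

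\textbf{5} for $\Box$ and $\lozenge$. The target is $\lozenge\varphi\to\Box\lozenge\varphi$, i.e.\ $\neg\Box\neg\varphi\to\Box\neg\Box\neg\varphi$. Write $\psi:=\neg\varphi$.
\begin{itemize}
\item (Abs) gives $\lozenge\Box\psi\to\Box\Box\psi$, since $\Box\psi$ is itself a formula of the form $\top\brd\psi$.
\item \textbf{T}, instantiated at $\Box\psi$, gives $\Box\Box\psi\to\Box\psi$.
\item Chaining these two yields $\lozenge\Box\psi\to\Box\psi$.
\item Contraposing gives $\neg\Box\psi\to\neg\lozenge\Box\psi$.
\item Syntactically, $\neg\lozenge\Box\psi$ is $\neg\neg\Box\neg\Box\psi$, which CL reduces to $\Box\neg\Box\psi$.
\end{itemize}
Substituting $\psi=\neg\varphi$ gives exactly $\lozenge\varphi\to\Box\lozenge\varphi$. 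No extensionality of $\brd$ is needed here, because the double negation sits outside the $\brd$, not inside it.

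Finally, the dual forms for $\lozenge$, such as $\varphi\to\lozenge\varphi$, follow from the $\Box$ versions by contraposition, because $\lozenge$ is defined as $\neg\Box\neg$. The one point to check carefully is that the contrapositions never require replacing provably equivalent formulas inside the scope of $\brd$. As arranged above, none of them do.
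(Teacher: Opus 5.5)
Your proposal is correct and follows the paper's proof essentially step for step: \textbf{K} is the $\top$-instance of \textbf{K-RC}, \textbf{T} comes from \textbf{RMP} together with $\vdash \R\top$, \textbf{5} is \textbf{Abs} chained with \textbf{T}, and necessitation is \textbf{RCI} discharged by $\vdash \lozenge\R\top$. The only differences are cosmetic: you derive $\lozenge\R\top$ by inlining the \textbf{RMP} argument where the paper simply cites \textbf{T}, and you take one further, purely propositional and legitimate, step that converts the paper's form $\lozenge\Box\varphi \to \Box\varphi$ into $\lozenge\varphi \to \Box\lozenge\varphi$.
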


\begin{proof}
    \textbf{K-Axiom: $\Box(\varphi \to \psi) \to (\Box \varphi \to \Box \psi) $}

\begin{enumerate}
    \item $\vdash ((\top \brd (\varphi \to \psi)) \to ((\top \brd \varphi) \to (\top \brd \psi)))$ \hfill by \textbf{K-RC}
    \item $\vdash \Box (\varphi \to \psi) \to (\Box \varphi \to \Box \psi)$ \hfill Definition of $\Box$
\end{enumerate}

\textbf{T-Axiom: $\Box \varphi \to \varphi$}

\begin{enumerate}
    \item $\vdash (\R \top \land (\top \brd \varphi)) \to \varphi$ \hfill by \textbf{RMP}
    \item $\vdash \R \top$ \hfill by \textbf{N}$_\R$ from classical tautology $\top$
    \item $\vdash (\top \brd \varphi) \to \varphi$ \hfill from (1) and (2)
    \item $\vdash \Box \varphi \to \varphi$ \hfill definition of $\Box$
\end{enumerate}

\textbf{5-Axiom: $\lozenge \Box \varphi \to \Box \varphi$}

\begin{enumerate}
    \item $\vdash \lozenge(\top \brd \varphi) \to \Box (\top \brd \varphi)$ \hfill by \textbf{Abs}
    \item $\vdash \Box (\top \brd \varphi) \to (\top \brd \varphi)$ \hfill by T-Axiom (from above)
    \item $\vdash \lozenge \Box \varphi \to \Box \varphi$ \hfill from (1) and (2) and definition of $\Box$
\end{enumerate}

\textbf{Necessitation for $\Box$}

\begin{enumerate}
    \item $\vdash \varphi$ \hfill Assumed to be derivable
    \item $\vdash \top \to \varphi$ \hfill by \textbf{CL}
    \item $\vdash \lozenge \R \top \to (\top \brd \varphi)$ \hfill by \textbf{RCI} on (2)
    \item $\vdash \R \top$ \hfill by \textbf{N}$_\R$ as before
    \item $\vdash (\top \brd \varphi)$ \hfill from (3), (4) and \textbf{T}
    \item $\vdash \Box \varphi$ \hfill definition of $\Box$
\end{enumerate}
\end{proof} 

As a next step, we show that our axiomatisation of \logic~ is strongly sound with regard to the class of all classification models.

\begin{theorem}[Strong Soundness]\label{teo:soundness} 
 If  $\Pi \vdash \varphi$, then $\Pi \models \varphi$.
\end{theorem}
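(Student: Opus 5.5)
The plan is to reduce strong soundness to weak soundness plus a property of $\models$, and then establish weak soundness by the standard induction on the length of derivations. By Definition~\ref{def:derivation}, $\Pi \vdash \varphi$ means either $\vdash \varphi$ outright, or $\vdash \psi_1 \land \dots \land \psi_n \to \varphi$ for some $\psi_1,\dots,\psi_n \in \Pi$. So it suffices to prove that every theorem of \logic\ is valid. If $\models \psi_1\land\dots\land\psi_n \to \varphi$, then any pointed model $\M,s$ satisfying all of $\Pi$ satisfies each $\psi_i$, hence satisfies $\varphi$, so $\Pi\models\varphi$. For weak soundness I will show that each axiom is valid in every model $\M=(X,\Tau,v)$ and that each rule preserves validity over the class of all models. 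Throughout I will use the following facts:
\begin{itemize}
\item $\|\R\varphi\|^\M=\mathrm{int}(\|\varphi\|^\M)$;
\item the truth value of $\varphi\brd\psi$ does not depend on the state;
\item Lemma~\ref{lem: universal}, which gives the global readings of $\Box$ and $\lozenge$;
\item the listed properties of the interior operator.
\end{itemize}

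The axioms go as follows.
\begin{itemize}
\item (CL) is immediate from the Boolean clauses.
\item (K-R) and (4) are the usual McKinsey--Tarski facts: monotonicity and distribution over $\cap$ for (K-R), and idempotence for (4).
\item (K-RC): if $\emptyset\ne\mathrm{int}\|\varphi\|\subseteq\|\psi\to\chi\|$ and $\mathrm{int}\|\varphi\|\subseteq\|\psi\|$, then $\mathrm{int}\|\varphi\|\subseteq\|\chi\|$.
\item (Abs) holds because $\varphi\brd\psi$ is state-independent. Its extension is either $\emptyset$ or $X$, so $\lozenge$ of it gives $\Box$ of it by Lemma~\ref{lem: universal}.
\item (RMP): $s\in\mathrm{int}\|\varphi\|\subseteq\|\psi\|$.
\item (Dis): from $\mathrm{int}\|\varphi\|\subseteq\|\psi\|$, monotonicity and idempotence give $\mathrm{int}\|\varphi\|=\mathrm{int}\,\mathrm{int}\|\varphi\|\subseteq\mathrm{int}\|\psi\|=\|\R\psi\|$.
\item (NV) is the non-emptiness clause read via Lemma~\ref{lem: universal}.
\item (Mon): the antecedent $\lozenge\R(\varphi\land\chi)$ gives $\mathrm{int}\|\varphi\land\chi\|\neq\emptyset$. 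Monotonicity of $\mathrm{int}$ then gives $\mathrm{int}\|\varphi\land\chi\|\subseteq\mathrm{int}\|\varphi\|\subseteq\|\psi\|$.
\end{itemize}

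For the rules, MP is trivial. For N$_\R$, if $\|\varphi\|=X$ in every model then $\mathrm{int}(X)=X$, so $\R\varphi$ is valid. For RCI, suppose $\varphi\to\psi$ is valid, so $\|\varphi\|\subseteq\|\psi\|$ in each model. If $\lozenge\R\varphi$ holds at a state, then $\mathrm{int}\|\varphi\|\ne\emptyset$, and $\mathrm{int}\|\varphi\|\subseteq\|\varphi\|\subseteq\|\psi\|$, so $\varphi\brd\psi$ holds.

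The one point needing care is validity versus truth in a single model. The premises of RCI and N$_\R$ must hold globally, at every state, for the conclusion to follow. The induction must therefore carry the invariant ``valid in every model'' (equivalently, ``true at all states of every model''), not truth at a single point. This is also why the strong version is routed through the finite-conjunction clause of Definition~\ref{def:derivation} rather than by applying rules to the premises in $\Pi$. Given that, I expect no real obstacle. The steps most prone to error are (Dis) and (Mon), since they need idempotence and monotonicity of $\mathrm{int}$ rather than just inclusion.
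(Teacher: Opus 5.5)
Your proposal is correct and follows essentially the same route as the paper: induction on derivation length, checking that every axiom is valid and every rule preserves validity. The paper writes out only \textbf{Mon}, \textbf{RMP}, \textbf{Dis} and \textbf{RCI} and leaves the rest to the reader, whereas you cover all cases; your explicit reduction of the strong version to weak soundness via the finite-conjunction clause of Definition~\ref{def:derivation}, and your point that the invariant must be global validity because \textbf{RCI} and $\textbf{N}_{\R}$ do not preserve pointwise truth, are correct details the paper leaves implicit.
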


\begin{proof}
    We proceed by induction on the derivation length, distinguishing cases according to the last rule applied. We show below the details for axioms \textbf{Mon}, \textbf{RMP} and \textbf{Dis} and rule $\mathbf{RCI}$. The other rules and axioms are left up to the reader.

\textbf{Mon}: Given $\M=(X,\Tau,v)$ and $s \in X$ such that $\M,s \models \lozenge \R(\varphi \land \chi) \land (\varphi \brd \psi)$, This means that there exists a state $w \in X$ such that $w \in int(||\varphi \land \chi||^\M)$ and $int(||\varphi||^\M) \subseteq ||\psi||^\M$. Putting those two together and by the definition of the interior operator, we get that $w \in int(||\varphi \land \chi||^\M) \subseteq int(||\varphi ||^\M) \subseteq ||\psi||^\M$. This means that $int(||\varphi \land \chi||^\M) \not= \emptyset$ and $int(||\varphi \land \chi||^\M) \subseteq ||\psi||^\M$, hence $\M,s \models \varphi \land \chi \brd \psi$. 

\textbf{RMP}: Given $\M=(X,\Tau,v)$ and $s \in X$ such that $\M,s \models \R \varphi \land (\varphi \brd \psi$), then we get $s \in int(||\varphi||^\M) \subseteq ||\psi||^\M$. This means $\M,s \models \psi$.

\textbf{Dis}: Given $\M=(X,\Tau,v)$ and $s \in X$ such that $\M,s \models \varphi \brd \psi$, then we get $int(||\varphi||^\M) \subseteq ||\psi||^\M$. By the monotonicity property  of the interior operator we derive $int(int(||\varphi||^\M)) \subseteq int(||\psi||^\M)$ and furthermore $int(||\varphi||^\M) \subseteq int(||\psi||^\M)$ by the property $int(int(||\varphi||^\M)=int(||\varphi||^\M)$. Hence, $\M, s \models \varphi \brd \R(\psi)$ .

\textbf{RCI}: Given  $\M=(X,\Tau,v)$  with $\M \models \varphi \rightarrow \psi$. First, we take a state $s \in X$ such that $\M,s \models \lozenge\R \varphi$. Hence $int(||\varphi||^\M) \not= \emptyset$ and by the assumptions $\M \models \varphi \rightarrow \psi$ we also get $int(||\varphi||^\M) \subseteq ||\psi||^\M$. By the property  $int(||\varphi||^\M) \subseteq ||\varphi||^\M$ of the interior operator we derive $\emptyset \not= int(||\varphi||^\M) \subseteq ||\psi||^\M$, which means $\M, s \models \varphi \brd \psi$. 
\end{proof}

To prove completeness, we construct a canonical model through maximal consistent sets and show that the logic is sound and complete with regard to arbitrary sets $X$. Since the language is definitionally equivalent to topological \textbf{S4} with the universal modality, completeness for the present language could also be obtained indirectly by translating formulas with $\brd$ into the language with $\R$ and $\Box$. Nevertheless, we give a direct canonical proof for the present axiomatisation. The purpose of the proof is not to introduce a new canonical-model technique, but to establish the adequacy of the proof system for the language in which $\brd$ is taken as primitive. In particular, the proof makes explicit how the axioms and rules capture the two semantic components of $\brd$: the non-emptiness of the robust antecedent region and its global inclusion in the consequent.

\begin{definition}\label{def:MCS}
    A set $\Gamma \subseteq\mathcal{L}$ is called a maximal consistent set (MCS for short) if (a) $\Gamma \not\vdash \bot$, and (b) for every $\varphi \in\mathcal{L}$ either $\varphi \in \Gamma$ or $\neg \varphi \in \Gamma$. We write $MCS$ for the set of all maximal consistent sets.
\end{definition}

Since the operators $\brd$ and the derived modality $\Box$ behave as global modalities, we need to make sure that all the states inside the canonical model satisfy the same $\brd$ formulas. Hence, we can not use all MCS for the canonical model. Instead, we will use a set of MCS constructed from one initial MCS $\Theta$. 

\begin{definition}
    Given a MCS $\Gamma$ we define $\Gamma^{-1} := \{ \varphi \in \mathcal{L}: \top \brd \varphi \in \Gamma \}$. Given a fixed MCS $\Theta$ we define $\Omega := \Theta^{-1}$ and $X_\Omega := \{ \Gamma \in \text{MCS} : \Gamma^{-1} = \Omega \}$.
\end{definition}

The set $X_\Omega$ contains all MCS that share the same $\Box \varphi$ formulas as $\Theta$. 

As a next step, we need to define the robust classification space $\Tau$ on our canonical model. For that, we are going to define a set $\B \subseteq P(X_\Omega)$ which will serve as the base of $\Tau$. A robust classification base $\B$ of a robust classification space $\Tau$ consists of robust classification sets such that every such set of the robust classification space can be represented as the union of some subfamily of $\B$. Our base consists of the following sets:

\begin{definition}
Given a formula $\varphi \in\mathcal{L}$ then $O_\varphi := \{ \Gamma \in X_\Omega : \R \varphi \in \Gamma \}$
\end{definition}

To make sure that a set $\B \subseteq P(X_\Omega)$ generates a robust classification space, we need to verify two properties. First, we need to ensure that $\B$ generates the whole set $X_\Omega$; we call this property $B_1$. Secondly, we need to make sure that $\B$ is closed under intersection; we call this property $B_2$. \footnote{Although property $B_2$ is more specific than the general condition typically required in the literature for topological bases, we will still satisfy it.}

The following verification follows the standard topo-canonical model construction for \textbf{S4}. The basic open sets are determined by formulas whose robust versions belong to a maximal consistent set. We include the argument only to keep the construction self-contained and to make clear how the topology is generated in the present language.

\begin{lemma}
    The set $\mathcal{B} := \{ O_\varphi \subseteq X_\Omega : \varphi \in \mathcal{L} \}$ fulfils properties $B_1$ and $B_2$.
\end{lemma}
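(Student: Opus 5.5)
We check the two properties separately. In both cases we only need facts about the $\R$-fragment of the system: classical reasoning inside maximal consistent sets, together with \textbf{K-R}, \textbf{N}$_\R$ and the derived regularity rule for $\R$.

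\textbf{Property $B_1$.} We show that $O_\top = X_\Omega$, so $X_\Omega$ is itself a basic set and in particular the union of members of $\B$. By classical logic $\vdash \top$, and by \textbf{N}$_\R$ we get $\vdash \R\top$. Every MCS is deductively closed: if $\Gamma \vdash \chi$ and $\neg\chi \in \Gamma$, then $\Gamma \vdash \bot$. Hence $\R\top \in \Gamma$ for every $\Gamma \in X_\Omega$, which gives $O_\top = X_\Omega$.

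\textbf{Property $B_2$.} We show that $O_\varphi \cap O_\psi = O_{\varphi \wedge \psi}$. This requires the theorem $\vdash \R(\varphi\wedge\psi) \leftrightarrow (\R\varphi \wedge \R\psi)$, which we derive in the usual normal-modal way from \textbf{K-R} and \textbf{N}$_\R$.

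First we obtain the monotonicity rule: from $\vdash \alpha \to \beta$ infer $\vdash \R\alpha \to \R\beta$. Apply \textbf{N}$_\R$ to $\alpha\to\beta$ to get $\vdash \R(\alpha\to\beta)$, then use \textbf{K-R} and \textbf{MP}.

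For the left-to-right direction, apply the monotonicity rule to the tautologies $\varphi\wedge\psi \to \varphi$ and $\varphi\wedge\psi\to\psi$.

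For the right-to-left direction, apply the monotonicity rule to the tautology $\varphi \to (\psi \to \varphi\wedge\psi)$. This yields $\R\varphi \to \R(\psi\to\varphi\wedge\psi)$. Combining it with \textbf{K-R} gives $\R\psi \to \R(\varphi\wedge\psi)$, and classical logic then yields $(\R\varphi\wedge\R\psi)\to\R(\varphi\wedge\psi)$.

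Now take $\Gamma \in X_\Omega$. By maximality, $\R\varphi \wedge \R\psi \in \Gamma$ if and only if both $\R\varphi \in \Gamma$ and $\R\psi \in \Gamma$. By deductive closure and the biconditional, this holds exactly when $\R(\varphi\wedge\psi) \in \Gamma$. Therefore $O_\varphi\cap O_\psi = O_{\varphi\wedge\psi} \in \B$, so $\B$ is closed under binary intersections.

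The only slightly delicate point is using the Hilbert-style notion of derivation in Definition~\ref{def:derivation} correctly. \textbf{N}$_\R$ may be applied only to theorems, never to members of $\Gamma$, and the derivations above respect this. Beyond that, we only need the standard MCS facts (deductive closure and the behaviour of $\wedge$), which follow directly from Definition~\ref{def:MCS}. None of the $\brd$-axioms and nothing about membership in $X_\Omega$ is used, apart from the fact that $X_\Omega$ consists of MCSs.
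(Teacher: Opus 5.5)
Your proof is correct and follows the paper's argument. For $B_1$ you obtain $O_\top = X_\Omega$ from $\vdash \R\top$ via \textbf{N}$_\R$. For $B_2$ you derive $\R(\varphi\wedge\psi)\leftrightarrow(\R\varphi\wedge\R\psi)$ from the tautology $\varphi\to(\psi\to(\varphi\wedge\psi))$ using \textbf{N}$_\R$ and \textbf{K-R}, which gives $O_\varphi\cap O_\psi = O_{\varphi\wedge\psi}$.
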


\begin{itemize}
   \item $B_1$: This follows from the fact that $O_\top = X_\Omega$ since $\R \top$ is contained in every MCS by the rule $\textbf{N}_\R$.
    \item $B_2$: Given $\Gamma \in O_\varphi \cap O_\psi$, then we have $\R(\varphi) \in \Gamma$ and $\R(\psi) \in \Gamma$. Using the classical propositional tautology $\varphi \rightarrow (\psi \rightarrow (\varphi \land \psi))$ the rule $\textbf{N}_\R$ and the axiom \textbf{K-R} we derive $(\R(\varphi) \land \R(\psi)) \rightarrow \R(\varphi \land \psi)$. Since maximal consistent sets are closed under consequence, we can conclude  $\Gamma \in O_{\varphi \land \psi}$. This means $ O_\varphi \cap O_\psi \subseteq  O_{\varphi \land \psi}$. Similarly, we can also derive $\R(\varphi \land \psi) \to (\R(\varphi) \land \R(\psi))$ which lets us conclude $O_\varphi \cap O_\psi \supseteq  O_{\varphi \land \psi}$. Consequently, $\B$ is closed under intersection.
\end{itemize}

 The main additional point, compared with the ordinary canonical topological model for \textbf{S4}, concerns the connective $\brd$. Since $\brd$ behaves globally, similarly to the universal modality, the canonical model is built over $X_\Omega$, the class of maximal consistent sets sharing the same global theory. The next lemma records the canonical condition corresponding to $\varphi \brd \psi$: the robust antecedent must be non-empty in the model, and its robust region must be included in the truth set of the consequent.

\begin{lemma}\label{lem:Equiv}
Let $\Gamma \in X_\Omega$. Then the following are equivalent:
\[
\begin{array}{lcl}
1.~\varphi \brd \psi \in \Gamma & & 2.~\R \varphi \to \psi \in \Omega \text{~ and ~} \neg \R \varphi \notin \Omega
\end{array}
\]
\end{lemma}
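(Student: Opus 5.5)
We prove the two directions separately, working purely syntactically with the axioms and with the closure properties of maximal consistent sets. Throughout, recall that $\Gamma \in X_\Omega$ means $\Gamma^{-1} = \Omega$, i.e.\ for every $\chi$, $\Box\chi = \top\brd\chi \in \Gamma$ iff $\chi \in \Omega$. So condition 2 reads: $\Box(\R\varphi\to\psi)\in\Gamma$ and $\Box\neg\R\varphi\notin\Gamma$. Since $\Gamma$ is maximal, the latter is equivalent to $\lozenge\R\varphi = \neg\Box\neg\R\varphi \in\Gamma$. The lemma therefore reduces to showing
\[
\vdash (\varphi\brd\psi) \leftrightarrow \bigl(\lozenge\R\varphi \wedge \Box(\R\varphi\to\psi)\bigr),
\]
i.e.\ the syntactic counterpart of the definability remark of subsection~\ref{sec:similarities}.

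\textbf{From 1 to 2.} Assume $\varphi\brd\psi\in\Gamma$.
\begin{itemize}
\item By \textbf{NV}, $\lozenge\R\varphi\in\Gamma$, hence $\neg\R\varphi\notin\Omega$.
\item For the other conjunct, first move from $\varphi\brd\psi$ to $\Box(\varphi\brd\psi)$. By \textbf{T} for $\Box$ (Lemma~\ref{lemma:boxdiamond}) we have $\varphi\brd\psi \to \lozenge(\varphi\brd\psi)$, and \textbf{Abs} then yields $\Box(\varphi\brd\psi)$.
\item Next, \textbf{RMP} gives the theorem $(\varphi\brd\psi)\to(\R\varphi\to\psi)$. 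By necessitation and \textbf{K} for $\Box$ (Lemma~\ref{lemma:boxdiamond}), $\Box(\varphi\brd\psi)\to\Box(\R\varphi\to\psi)$.
\item Hence $\top\brd(\R\varphi\to\psi)\in\Gamma$, i.e.\ $\R\varphi\to\psi\in\Omega$.
\end{itemize}

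\textbf{From 2 to 1.} Assume $\Box(\R\varphi\to\psi)\in\Gamma$ and $\lozenge\R\varphi\in\Gamma$.
\begin{itemize}
\item Apply \textbf{RCI} to the tautology $\varphi\to\top$ to obtain $\lozenge\R\varphi\to(\varphi\brd\top)$. Then \textbf{Dis} gives $\varphi\brd\R\top$; in fact it is more useful to get $\varphi\brd\R\varphi$ directly.
\item Apply \textbf{RCI} to $\varphi\to\varphi$, giving $\lozenge\R\varphi\to(\varphi\brd\varphi)$. \textbf{Dis} then gives $\varphi\brd\R\varphi$.
\item It remains to transfer the global fact $\Box(\R\varphi\to\psi)$ into the consequent of $\varphi\brd\cdot$, obtaining $\varphi\brd(\R\varphi\to\psi)$. \textbf{K-RC} then combines this with $\varphi\brd\R\varphi$ to give $\varphi\brd\psi$.
\end{itemize}

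This transfer step is the main obstacle. The axioms contain no direct principle ``$\Box\chi\wedge\lozenge\R\varphi\to(\varphi\brd\chi)$''. My first attempt is via \textbf{Mon} with antecedent $\top$:
\begin{itemize}
\item From $\Box\chi = \top\brd\chi$ and $\lozenge\R(\top\wedge\varphi)$, \textbf{Mon} yields $(\top\wedge\varphi)\brd\chi$, where $\chi := \R\varphi\to\psi$.
\item The premise $\lozenge\R(\top\wedge\varphi)$ follows from $\lozenge\R\varphi$ by replacement of provable equivalents inside $\R$ (via $\textbf{N}_\R$ and \textbf{K-R}) and inside $\lozenge$ (via Lemma~\ref{lemma:boxdiamond}).
\item It remains to rewrite the antecedent $\top\wedge\varphi$ to $\varphi$, using a replacement-of-equivalents property for the left argument of $\brd$.
\end{itemize}

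That property, $\vdash\alpha\leftrightarrow\beta$ implies $\vdash(\alpha\brd\gamma)\to(\beta\brd\gamma)$, I would derive from \textbf{Mon} and \textbf{NV}:
\begin{itemize}
\item \textbf{NV} gives $\lozenge\R\alpha$ from $\alpha\brd\gamma$, hence $\lozenge\R(\alpha\wedge\beta)$ by the equivalence.
\item \textbf{Mon} then gives $(\alpha\wedge\beta)\brd\gamma$.
\item Finally, pass from $\alpha\wedge\beta$ to $\beta$ by the same equivalence argument. This is again the issue of replacing the antecedent by a provable equivalent.
\end{itemize}

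If this circularity cannot be broken, I would fall back on proving replacement of equivalents for the left argument of $\brd$ as a separate derived rule, by induction on derivations. With the rule \textbf{RCI} and \textbf{K-RC} it should go as follows: from $\alpha\brd\gamma$ obtain $\lozenge\R\beta$, then $\beta\brd\alpha$ by \textbf{RCI}, and then use \textbf{Abs}/\textbf{RMP} to show $\Box(\R\beta\to\gamma)$. Once that rule is in hand, both directions of the lemma follow as outlined.
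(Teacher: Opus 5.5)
Your $1\Rightarrow 2$ direction is correct and coincides with the paper's; you even make explicit the use of \textbf{T} for $\Box$ that is needed to feed \textbf{Abs}. For $2\Rightarrow 1$ your skeleton is also the paper's. You obtain $\varphi\brd\R\varphi$ from \textbf{RCI} and \textbf{Dis}, obtain $\varphi\brd(\R\varphi\to\psi)$ from $\Box(\R\varphi\to\psi)$ by \textbf{Mon} with antecedent $\top$, and close with \textbf{K-RC}. The gap is exactly where you locate it, and the proposal does not close it. \textbf{Mon} only yields $(\top\land\varphi)\brd(\R\varphi\to\psi)$. Your fallback is circular: once you have $\lozenge\R\beta$ and $\Box(\R\beta\to\gamma)$, concluding $\beta\brd\gamma$ is precisely direction $2\Rightarrow 1$ for $\beta$. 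The paper's proof makes the same move silently, reading the \textbf{Mon} instance as if its antecedent were $\varphi$ rather than $\top\land\varphi$.

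In fact this step cannot be derived from the listed axioms and rules at all. Evaluate formulas at single valuations $M: At\to\{0,1\}$, by induction on formulas simultaneously for all $M$. Treat $\neg$ and $\land$ classically and $\R\alpha$ as $\alpha$. When $\alpha$ is not an atom, read $\alpha\brd\beta$ as ``$\alpha$ and $\beta$ are true at $M$''. When $\alpha$ is an atom, read it as ``$\alpha$ is true at $M$ and $\beta$ is true at every valuation at which $\alpha$ is true''. Since $\top$ is not an atom, $\Box\chi$ and $\lozenge\chi$ are true at $M$ iff $\chi$ is. A routine check then shows that all axioms are valid and that \textbf{MP}, \textbf{RCI} and \textbf{N}$_{\R}$ preserve validity. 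For \textbf{Mon}, note that $\varphi\land\chi$ is never an atom; for \textbf{RMP}, use that $M$ itself is among the valuations quantified over. But at $M(p)=M(q)=1$ the formulas $\Box(\R p\to q)$, $\lozenge\R p$ and $(\top\land p)\brd q$ are true while $p\brd q$ is false. Hence $\{\Box(\R p\to q),\lozenge\R p,\neg(p\brd q)\}$ is consistent, and any MCS $\Gamma$ extending it, with $\Omega:=\Gamma^{-1}$, satisfies condition 2 but not condition 1. Since $(\Box(\R p\to q)\land\lozenge\R p)\to(p\brd q)$ is valid over classification models, this is a genuine incompleteness of the system, not a presentational slip. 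To make your outline work you must add a sound principle, for instance the rule ``from $\alpha\leftrightarrow\beta$ infer $(\alpha\brd\gamma)\leftrightarrow(\beta\brd\gamma)$'' or the axiom $(\lozenge\R\varphi\land\Box\chi)\to(\varphi\brd\chi)$. With either one the transfer step, and hence your \textbf{K-RC} argument, goes through.
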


\begin{proof}
(1 $\Rightarrow$ 2): Applying the rule of necessitation for $\Box$ and the axiom \textbf{K} for $\Box$ to the axiom  \textbf{RMP} $\varphi \brd \psi \rightarrow (\R\varphi \rightarrow \psi)$ we derive  $\Box(\varphi \brd \psi) \rightarrow \Box(\R\varphi \rightarrow \psi)$. Through the axiom \textbf{Abs} and the assumption $\varphi \brd \psi \in \Gamma$ we get $\top \brd (\R \varphi \to \psi) \in \Gamma$, hence $\R \varphi \to \psi \in \Omega$. From $\varphi\brd \psi \in \Gamma$ we get $\lozenge \R \varphi \in \Gamma$ through \textbf{NV}, hence $\neg \R \varphi \not\in \Omega$.

(2 $\Rightarrow$ 1): By the assumption $\R \varphi \to \psi \in \Omega$ we have $\top \brd (\R \varphi \to \psi) \in \Gamma$ and since $\neg \R \varphi \not\in \Omega$ we have $\Box \neg \R \varphi \not\in \Gamma$. Hence, by maximality $\lozenge \R \varphi \in \Gamma$. Through the axiom \textbf{Mon} we get $\lozenge \R \varphi \to [(\top \brd (\R \varphi \to \psi)) \to (\varphi \brd (\R \varphi \to \psi))] \in \Gamma$. Since $\Gamma$ is an MCS, we have $\varphi \brd (\R \varphi \to \psi) \in \Gamma$. Applying \textbf{K-RC} we arrive at $(\varphi \brd \R \varphi) \to (\varphi \brd \psi) \in \Gamma$. Since $(\varphi \brd \varphi) \in \Gamma$ because of the rule of \textbf{RCI} and $\lozenge \R \varphi \in \Gamma$ it follows that $(\varphi \brd \R \varphi) \in \Gamma$ from the axiom \textbf{Dis}. Finally, we arrive at $\varphi \brd \psi \in \Gamma$.
\end{proof}

\begin{remark}
    Lemma \ref{lem:Equiv} shows that the axiomatisation of \logic ~ syntactically recovers the semantic non-vacuity condition of the conditional operator $\brd$ through the axiom \textbf{NV}.
\end{remark}

\begin{lemma}[Witness Lemma for $\lozenge$]\label{Lemma:Wittness}
Let $\Gamma \in X_\Omega$ and let $\chi \in \mathcal L$. Then $\lozenge \chi \in \Gamma$ iff $\text{there exists } \Delta \in X_\Omega \text{ such that } \chi \in \Delta$.
\end{lemma}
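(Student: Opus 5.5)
The two directions are handled separately, and in both I treat $\Box$ as a normal \textbf{S5} modality, using Lemma~\ref{lemma:boxdiamond} (\textbf{K}, \textbf{T}, \textbf{5} and necessitation for $\Box$). From these I derive \textbf{4} for $\Box$ in the usual way. I also use the standard facts that MCSs are closed under derivable consequence, and that Lindenbaum's lemma holds for this system, since derivability is finitary and the language is countable.

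\emph{Right to left.} Suppose $\Delta \in X_\Omega$ with $\chi \in \Delta$, and assume for contradiction that $\lozenge\chi \notin \Gamma$. Then $\neg\lozenge\chi \in \Gamma$, i.e.\ $\Box\neg\chi = \top \brd \neg\chi \in \Gamma$, so $\neg\chi \in \Gamma^{-1} = \Omega = \Delta^{-1}$. Hence $\Box\neg\chi \in \Delta$. By the \textbf{T}-axiom for $\Box$ this gives $\neg\chi \in \Delta$, contradicting $\chi \in \Delta$ and the consistency of $\Delta$.

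\emph{Left to right.} Assume $\lozenge\chi \in \Gamma$. Consider
\[
\Sigma := \{\chi\} \cup \{\Box\varphi : \varphi \in \Omega\} \cup \{\neg\Box\varphi : \varphi \notin \Omega\}.
\]
First I show $\Sigma$ is consistent. Suppose not. Then finitely many members already derive $\bot$. Using closure under conjunction, namely $\Box\varphi_1 \wedge \Box\varphi_2 \to \Box(\varphi_1\wedge\varphi_2)$ by \textbf{K} and necessitation, and $\varphi_i\in\Omega$ implies $\varphi_1\wedge\varphi_2 \in \Omega$ since $\Gamma$ is an MCS, I obtain
\[
\vdash \Box\alpha \wedge \neg\Box\beta_1 \wedge \dots \wedge \neg\Box\beta_m \to \neg\chi,
\]
with $\alpha \in \Omega$ and $\beta_j \notin \Omega$. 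Every conjunct on the left is \emph{boxed}, i.e.\ of the form $\Box\cdot$ or $\neg\Box\cdot$, and in \textbf{S5} each boxed formula $\theta$ satisfies $\vdash \theta \to \Box\theta$: by \textbf{4} for $\Box\alpha$ and by \textbf{5} for $\neg\Box\beta_j$. Applying necessitation and \textbf{K} to the displayed implication therefore yields $\vdash \Box\alpha \wedge \bigwedge_j \neg\Box\beta_j \to \Box\neg\chi$. All the antecedents lie in $\Gamma$, because $\Box\alpha\in\Gamma$ and $\Box\beta_j\notin\Gamma$. So $\Box\neg\chi \in \Gamma$, contradicting $\lozenge\chi\in\Gamma$.

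Now extend $\Sigma$ by Lindenbaum to an MCS $\Delta$. By construction $\chi\in\Delta$, and for every $\varphi$ we have $\Box\varphi\in\Delta$ iff $\varphi\in\Omega$. Thus $\Delta^{-1}=\Omega$ and $\Delta \in X_\Omega$, as required.

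The main obstacle is the consistency argument. It needs the step ``boxed formulas are necessary'', which depends on deriving \textbf{4} for $\Box$ from \textbf{K}, \textbf{T}, \textbf{5}. This is standard, but must be checked against the paper's system, which works via $\top\brd\cdot$. Including the negative clauses $\neg\Box\beta$ in $\Sigma$ is essential: without them we only get $\Omega \subseteq \Delta^{-1}$, not equality.
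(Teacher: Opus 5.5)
Your proposal is correct and follows essentially the same route as the paper. You use the same set $\{\chi\}\cup\{\Box\alpha:\alpha\in\Omega\}\cup\{\neg\Box\alpha:\alpha\notin\Omega\}$, reduce its inconsistency to $\Box\neg\chi\in\Gamma$, extend by Lindenbaum to obtain $\Delta^{-1}=\Omega$, and prove the converse with the same \textbf{T}-based argument. The one addition is that you make explicit the step the paper compresses into ``necessitation and the S5 principles'': $\Box\alpha$ and $\neg\Box\beta$ each imply their own necessitation, via \textbf{4} (derived from \textbf{K}, \textbf{T}, \textbf{5}) and via \textbf{5} respectively.
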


\begin{proof}[Proof sketch]
For the left-to-right direction, assume $\lozenge\chi\in\Gamma$. We show that $\{\chi\}\cup\{\Box\alpha:\alpha\in\Omega\}\cup\{\neg\Box\alpha:\alpha\notin\Omega\}$ is consistent. If not, then by the finitary definition of $\vdash$, some finite subset would already be inconsistent; using necessitation and the S5 principles for $\Box$, this would yield $\Box\neg\chi\in\Gamma$, contradicting $\lozenge\chi\in\Gamma$. Hence, the set is consistent, and, by the Lindenbaum Lemma, it can be extended to a maximal consistent set $\Delta$. By construction, $\chi\in\Delta$ and $\Delta^{-1}=\Omega$, so $\Delta\in X_\Omega$.

For the right-to-left direction, assume that there is some $\Delta\in X_\Omega$ with $\chi\in\Delta$. If $\lozenge\chi\notin\Gamma$, then by maximality $\Box\neg\chi\in\Gamma$. Since $\Gamma^{-1}=\Omega$, this gives $\neg\chi\in\Omega$. But $\Delta^{-1}=\Omega$, so $\Box\neg\chi\in\Delta$; by the  \textbf{T} axiom for $\Box$, we obtain $\neg\chi\in\Delta$, contradicting $\chi\in\Delta$. Therefore $\lozenge\chi\in\Gamma$.
\end{proof}

\begin{lemma}[Truth Lemma]\label{lem:truth-lemma}
Let $\M^{Can} = (X_\Omega, \Tau, v)$ be where $X_\Omega$ and  $\Tau$ are defined as before and $v: At \to P(X_\Omega)$ is defined as $v(p):=\{\Gamma \in X_\Omega: p \in \Gamma\}$. Then for every $\Gamma \in X_\Omega$ and $\varphi \in \mathcal{L}$ the following equivalence holds: $\varphi \in \Gamma \iff \M^{Can}, \Gamma \models \varphi$
\end{lemma}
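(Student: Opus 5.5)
We argue by induction on the structure of $\varphi$, writing $\|\varphi\|$ for $\|\varphi\|^{\M^{Can}}$. By the induction hypothesis, $\|\varphi\| = \{\Gamma \in X_\Omega : \varphi \in \Gamma\}$ for all proper subformulas. Here $\Tau$ is the topology generated by $\B$; this is legitimate because $\B$ satisfies $B_1$ and $B_2$.

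\textbf{Propositional cases.} The atomic case is the definition of $v$. The cases for $\neg$ and $\wedge$ follow from maximality and deductive closure of MCSs, as usual.

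\textbf{The case $\R\varphi$.} This is the standard topo-canonical argument for \textbf{S4}, and the key claim is $\mathrm{int}(\|\varphi\|) = O_\varphi$.

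For $O_\varphi \subseteq \mathrm{int}(\|\varphi\|)$:
\begin{itemize}
\item $O_\varphi$ is a basic open set.
\item If $\R\varphi \in \Gamma$, then $\varphi \in \Gamma$ by the \textbf{T} axiom for $\R$, so $O_\varphi \subseteq \|\varphi\|$ by the induction hypothesis.
\end{itemize}
Note that \textbf{T} for $\R$ is not listed among the axioms. I would derive it from \textbf{RMP} with $\psi = \varphi$ together with $\varphi \brd \varphi$. The latter holds only when $\lozenge\R\varphi \in \Gamma$, which is the case whenever $\R\varphi \in \Gamma$, using \textbf{T} for $\Box$ contraposed to get $\R\varphi \to \lozenge\R\varphi$. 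So from $\R\varphi$ we obtain $\lozenge\R\varphi$, then $\varphi \brd \varphi$ by \textbf{RCI}, and then $\varphi$ by \textbf{RMP}.

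For $\mathrm{int}(\|\varphi\|) \subseteq O_\varphi$:
\begin{itemize}
\item If $\Gamma \in \mathrm{int}(\|\varphi\|)$, there is a basic set with $\Gamma \in O_\psi \subseteq \|\varphi\|$.
\item We want $\R\psi \to \varphi \in \Omega$, i.e.\ $\Box(\R\psi \to \varphi)$.
\item Suppose not. Then $\lozenge(\R\psi \wedge \neg\varphi) \in \Gamma$, and Lemma~\ref{Lemma:Wittness} yields $\Delta \in X_\Omega$ with $\R\psi \in \Delta$ and $\varphi \notin \Delta$.
\item This contradicts $O_\psi \subseteq \|\varphi\|$ via the induction hypothesis.
\end{itemize}
Hence $\Box(\R\psi \to \varphi) \in \Gamma$, and by \textbf{T} for $\Box$, $\R\psi \to \varphi$ is a theorem of $\Gamma$'s theory. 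To push this inside $\R$ we need a monotonicity step: from $\Box(\alpha \to \beta)$ derive $\Box(\R\alpha \to \R\beta)$. I plan to obtain it as follows:
\begin{itemize}
\item Apply \textbf{Dis} to $\R\psi \brd \varphi$; this gives $\R\psi \brd \R\varphi$.
\item Obtain $\R\psi \brd \varphi$ from the $\Box$-formula via Lemma~\ref{lem:Equiv}, using $\R\R\psi \leftrightarrow \R\psi$ (from \textbf{4} and \textbf{T}) and non-vacuity $\lozenge\R\R\psi$.
\item Then \textbf{RMP} with $\R\R\psi \in \Gamma$ (by \textbf{4}) gives $\R\varphi \in \Gamma$.
\end{itemize}
This is the step I expect to be the main obstacle. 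The axiom system lacks an explicit monotonicity rule for $\R$, so every such move must be routed through $\brd$ and the non-vacuity condition.

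\textbf{The case $\varphi \brd \psi$.} By Lemma~\ref{lem:Equiv}, $\varphi \brd \psi \in \Gamma$ iff both $\R\varphi \to \psi \in \Omega$ and $\neg\R\varphi \notin \Omega$. Using the $\R$-case, $\|\R\varphi\| = \mathrm{int}(\|\varphi\|)$, and we translate each condition:
\begin{itemize}
\item $\neg\R\varphi \notin \Omega$ means $\lozenge\R\varphi \in \Gamma$. By Lemma~\ref{Lemma:Wittness} this holds iff some $\Delta \in X_\Omega$ contains $\R\varphi$, i.e.\ iff $\mathrm{int}(\|\varphi\|) \neq \emptyset$.
\item $\R\varphi \to \psi \in \Omega$ means $\Box(\R\varphi \to \psi) \in \Gamma$. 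This holds iff every $\Delta \in X_\Omega$ contains $\R\varphi \to \psi$: forward since all members of $X_\Omega$ share $\Omega$ and \textbf{T} holds; backward via the witness lemma applied to $\neg(\R\varphi \to \psi)$. By the induction hypothesis for $\psi$ and the $\R$-case, this is exactly $\mathrm{int}(\|\varphi\|) \subseteq \|\psi\|$.
\end{itemize}
Together these two conditions are precisely the semantic clause for $\brd$, which closes the induction.
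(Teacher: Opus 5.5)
Your $\brd$-case is the paper's argument, recast as a chain of equivalences through Lemma~\ref{lem:Equiv} and the witness lemma. Your derivation of \textbf{T} for $\R$ (via \textbf{RCI}, \textbf{RMP} and \textbf{T} for $\Box$) correctly supplies a step the paper uses without comment when it asserts $O_\varphi\subseteq\|\varphi\|$.

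In the $\R$-case the two routes diverge. The paper extracts $\chi_1,\dots,\chi_n$ with $\vdash(\Box\chi_1\land\dots\land\Box\chi_n)\to(\R\psi\to\varphi)$ and applies $\mathbf{N}_\R$ to that theorem. You instead get $\Box(\R\psi\to\varphi)\in\Gamma$ from the witness lemma, which is neater, and then push it under $\R$ via Lemma~\ref{lem:Equiv}, \textbf{Dis} and \textbf{RMP}. That detour is unnecessary, and its motivation is mistaken. $\mathbf{N}_\R$ with \textbf{K-R} gives monotonicity of $\R$ for theorems. For a $\Box$-premise you have $\vdash\Box\alpha\to\R\alpha$: \textbf{Dis} turns $\top\brd\alpha$ into $\top\brd\R\alpha$, and \textbf{RMP} with $\vdash\R\top$ finishes. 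So $\R(\R\psi\to\varphi)\in\Gamma$, hence $\R\R\psi\to\R\varphi\in\Gamma$ by \textbf{K-R}, and \textbf{4} with $\R\psi\in\Gamma$ gives $\R\varphi\in\Gamma$.

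The detour is not harmless. Direction $2\Rightarrow1$ of Lemma~\ref{lem:Equiv}, on which your $\brd$-case (like the paper's) also rests, fails for this axiom system. In its proof, \textbf{Mon} applied to $\top\brd(\R\varphi\to\psi)$ yields $(\top\land\varphi)\brd(\R\varphi\to\psi)$, not $\varphi\brd(\R\varphi\to\psi)$. No axiom or rule lets you replace an antecedent of $\brd$ by a provable equivalent, and that replacement is in fact underivable. To see this, keep every semantic clause unchanged except one: for atomic $p$, let $p\brd\psi$ hold iff $\mathrm{int}(\|p\|)\neq\emptyset$ and $\R p\to\psi$ is true in every model (well defined by recursion on $\psi$). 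This global condition is preserved by \textbf{RCI}, \textbf{K-RC} and \textbf{Dis}. It implies the inclusion in the current model that \textbf{RMP} and \textbf{Mon} need. The conclusion of \textbf{Mon} always has a conjunctive antecedent. So all axioms and rules stay sound, yet the valid formula $((\top\land p)\brd q)\to(p\brd q)$ fails.

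Now take an MCS $\Theta$ containing $(\top\land p)\brd q$ and $\neg(p\brd q)$. By $1\Rightarrow2$ and $\vdash\R p\leftrightarrow\R(\top\land p)$, condition 2 holds for $p,q$. Hence $p\brd q$ is true at $\Theta$ in $\M^{Can}$, although $p\brd q\notin\Theta$. The Truth Lemma as stated therefore fails for the given system. The same construction with antecedents of the form $\R\alpha$ shows that your intermediate claim $\R\psi\brd\varphi\in\Gamma$ can also be false.

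Use the direct route above for the $\R$-case. The $\brd$-case needs an extra principle, e.g.\ the axiom $(\lozenge\R\varphi\land\Box(\R\varphi\to\psi))\to(\varphi\brd\psi)$ or a rule for replacing provably equivalent antecedents; with either, your argument goes through.
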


\begin{proof}
The proof is done by induction on the formula construction of $\varphi$. We will only discuss the cases $\varphi \brd \psi$ and $\R(\varphi)$. The other cases are handled as usual and left to the reader.

\textbf{Case $\R \varphi$:} $\R\varphi\in \Gamma$ implies that $\Gamma \in O_\varphi$. By definition $O_\varphi \in \Tau$ and furthermore $O_\varphi \subseteq ||\varphi||^{\M^{Can}}$ by the induction hypothesis. Hence $\Gamma \in int(||\varphi||^{\M^{Can}})$ and therefore $\M^{Can}, \Gamma \models \R\varphi$.

$\R\varphi \not\in \Gamma$ implies that $\Gamma \not\in O_\varphi$. Let us assume there exists a $O \in \B$ such that $\Gamma \in O \subseteq ||\varphi||^{\M^{Can}}$. This means there exists a $\psi \in\mathcal{L}$ with $O_\psi \subseteq ||\varphi||^{\M^{Can}}$. Hence, every maximal consistent set in $X_\Omega$ contains $\R(\psi) \rightarrow \varphi$. As a consequence, we can find a finite number of $\chi_1, ..., \chi_n \in \Theta$ such that $\vdash (\Box \chi_1\land ... \land \Box \chi_n) \rightarrow (\R(\psi) \rightarrow \varphi)$. Applying \textbf{N}$_\R$, \textbf{Abs} and \textbf{4} we get $\vdash (\Box \chi_1\land ... \Box \chi_n) \rightarrow (\R(\psi) \rightarrow \R(\varphi))$. This means $\R(\varphi)$ is contained in every MCS of $X_\Omega$, which is in $O_\psi$, contradicting the assumption $\R\varphi \not\in \Gamma$. Finally, we derive that there can not exist a set in $\B$ such that $\Gamma \in O \subseteq ||\varphi||^{\M^{Can}}$, which means $\Gamma \not\in int(||\varphi||^{\M^{Can}})$ and therefore $\M^{Can}, \Gamma \not\models \R\varphi$.

\textbf{Case $\varphi \brd \psi$:} $\varphi \brd \psi \in \Gamma$ directly implies $\M^{Can}, \Gamma \models \varphi \brd \psi$ by 1. $\Rightarrow$ 2. of Lemma \ref{lem:Equiv} and the induction hypothesis.

For $\varphi \brd \psi \not\in \Gamma$, we discuss the two cases following from 2. $\Rightarrow$ 1. of Lemma \ref{lem:Equiv}. First case: $\R(\varphi) \rightarrow \psi \not\in \Omega$. We have $\Box(\R\varphi\to\psi)\notin\Gamma$, hence $\lozenge\neg(\R\varphi\to\psi)\in\Gamma$ by maximality. By Lemma \ref{Lemma:Wittness} there exists $\Delta\in X_\Omega$ such that
$\neg(\R\varphi\to\psi)\in\Delta$. Thus $\R\varphi\in\Delta$ and
$\psi\notin\Delta$. Therefore the $\R$-case of the truth lemma together with the induction hypotheses for $\varphi$ and $\psi$ gives $\Delta \in int(||\varphi||^{\M^{Can}})$ and $\Delta\notin||\psi||^{\M^{Can}}$, hence $int(||\varphi||^{\M^{Can}})\nsubseteq||\psi||^{\M^{Can}}$, so $\M^{Can},\Gamma\not\models\varphi\brd\psi$. Second case: $\neg \R(\varphi) \in \Omega$. Then by the induction hypothesis we derive $int(||\varphi||^{\M^{Can}}) = \emptyset$ which also means $\M^{Can}, \Gamma\not\models \varphi \brd \psi$.
\end{proof}

\begin{theorem}[Strong Completeness]\label{teo:completness}
If $\Pi \models \varphi$ then $\Pi \vdash \varphi$.
\end{theorem}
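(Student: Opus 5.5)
We argue by contraposition, using the canonical model of the Truth Lemma (Lemma~\ref{lem:truth-lemma}). Suppose $\Pi \not\vdash \varphi$. Then $\Pi \cup \{\neg\varphi\}$ is consistent: otherwise there would be $\psi_1,\dots,\psi_n \in \Pi$ with $\vdash \psi_1 \land \dots \land \psi_n \to \neg\neg\varphi$, and by \textbf{CL} we would get $\Pi \vdash \varphi$. By the Lindenbaum Lemma (already used in the proof of Lemma~\ref{Lemma:Wittness}) we extend $\Pi \cup \{\neg\varphi\}$ to a maximal consistent set $\Theta$.

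We then instantiate the canonical construction with this $\Theta$. Put $\Omega := \Theta^{-1}$ and let $X_\Omega$ be as defined above. First, $X_\Omega$ is non-empty, because $\Theta \in X_\Omega$ trivially: $\Theta^{-1} = \Omega$ by definition. Next, the base $\B$ satisfies $B_1$ and $B_2$, so it generates a robust classification space $\Tau$ on $X_\Omega$. Taking the canonical valuation $v$ gives a genuine classification model $\M^{Can} = (X_\Omega, \Tau, v)$.

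By the Truth Lemma, $\M^{Can}, \Theta \models \psi$ for every $\psi \in \Pi$, since $\Pi \subseteq \Theta$. Also $\M^{Can}, \Theta \models \neg\varphi$, so $\M^{Can}, \Theta \not\models \varphi$. Hence $\Pi \not\models \varphi$, which is the contrapositive of the claim.

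The real work has already been done in Lemma~\ref{lem:Equiv}, Lemma~\ref{Lemma:Wittness} and the Truth Lemma, so the remaining obstacle is mostly bookkeeping. Two points need care. The first is that the Truth Lemma must apply to the specific $\Theta$ we extended to. This holds because it was stated for an arbitrary fixed MCS $\Theta$, and the witness MCSs in $X_\Omega$ used for the $\brd$-case are built relative to that same $\Omega$. The second is that consistency and the Lindenbaum extension must be compatible with the finitary notion of derivation in Definition~\ref{def:derivation}. For this we check that inconsistency of $\Pi \cup \{\neg\varphi\}$ gives a finite conjunction of premises from $\Pi$ deriving $\varphi$; this is where compactness of $\vdash$ is used.
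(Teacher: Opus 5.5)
Your argument reproduces the paper's proof almost verbatim: contraposition, Lindenbaum extension, the canonical model over $X_\Omega$ with $\Omega=\Theta^{-1}$, and the Truth Lemma. The two points you single out, $\Theta\in X_\Omega$ and the finitary reading of consistency, are unproblematic. The genuine gap lies in what you treat as already done. The case $\varphi\brd\psi\notin\Gamma$ of the Truth Lemma relies on direction $2\Rightarrow 1$ of Lemma~\ref{lem:Equiv}, and that direction is not derivable in the system as given. The instance of \textbf{Mon} used there, with $\top$ in place of \textbf{Mon}'s $\varphi$ and $\varphi$ in place of its $\chi$, yields $(\top\wedge\varphi)\brd(\R\varphi\to\psi)$, not $\varphi\brd(\R\varphi\to\psi)$. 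Nothing in the system lets you replace the antecedent of $\brd$ by a provably equivalent formula: \textbf{Mon} only adds conjuncts, \textbf{RCI} only reaches provable consequents, and \textbf{K-RC}, \textbf{Dis} and \textbf{Abs} keep the antecedent fixed.

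No cleverer derivation can fix this, because the valid formula $(\Box(\R p\to q)\wedge\lozenge\R p)\to(p\brd q)$ is not a theorem. To see this, modify Definition~\ref{def:semantics} only for conditionals whose antecedent is an atom $p$. Besides $\emptyset\neq\mathrm{int}(||p||)\subseteq||\psi||$, require that $\psi$ be true in every one-point model in which $p$ is true. All axioms and rules stay sound under this reading:
\begin{itemize}
\item The extra requirement is independent of state and model, so \textbf{Abs} stays sound.
\item It is closed under modus ponens and under $\R$ (trivial on a one-point space), so \textbf{K-RC} and \textbf{Dis} stay sound.
\item Every valid consequence of $p$ satisfies it, so \textbf{RCI} stays sound.
\item \textbf{RMP} and \textbf{NV} only use the usual part of the clause, and \textbf{Mon} only concludes conditionals with conjunctive antecedents.
\item $\Box$ and $\lozenge$ are unaffected, since $\top$ is not an atom.
\end{itemize}
Yet in the one-point model where $p$ and $q$ are true, the formula above is false, because $q$ fails in the one-point model where only $p$ holds. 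Hence Lemma~\ref{lem:Equiv}, the Truth Lemma and the theorem itself fail for the axiomatisation as stated. Your proof, like the paper's, goes through only after strengthening the system. One option is the rule ``from $\vdash\varphi\leftrightarrow\varphi'$ infer $\vdash(\varphi\brd\psi)\leftrightarrow(\varphi'\brd\psi)$''. Another is to add directly the sound axiom $(\Box(\R\varphi\to\psi)\wedge\lozenge\R\varphi)\to(\varphi\brd\psi)$.
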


\begin{proof}
We argue by contraposition. Assume $\Pi \not\vdash \varphi$. Then the set $\Pi \cup \{\neg \varphi\}$ is consistent and can therefore be extended to a maximal consistent set $\Delta$. Let $\Omega := \Delta^{-1}$ and consider the canonical model $\M^{Can} = (X_\Omega,\Tau,v)$ based on $\Omega$. By Lemma~\ref{lem:truth-lemma}, every formula in $\Delta$ is true at $\Delta$ in $\M^{Can}$. Hence $\forall \psi \in \Pi:\ \M^{Can},\Delta \models \psi$ and $\M^{Can},\Delta \models \neg \varphi$. So $\M^{Can},\Delta \not\models \varphi$, and therefore $\Pi \not\models \varphi$ in the canonical model.
\end{proof}

The preceding completeness theorem is a completeness theorem with respect to arbitrary topological spaces. Since the language is countable, the canonical model has cardinality at most continuum, and its underlying set could therefore be injected into some subset of $\mathbb{R}^n$. However, this observation should not be understood as a completeness result for Euclidean feature spaces equipped with their usual metric topology. The topology transported from the canonical model will, in general, not coincide with the usual Euclidean topology. Establishing completeness or representation results for more restricted classes of spaces, such as metric spaces or Euclidean spaces with their standard topology, is a separate question which we leave for future work when explicit metrics are studied.

\section{Minimal robust models}\label{section: robust models}

In this section, we explain how a modeller might construct the topological space based on her cognitive abilities. Our approach to robust classification combines a practical heuristic for learning robust regions with a mathematical procedure that constructs a minimal topological space from specified robust subsets.

\subsection{First step: heuristic}

The first step is based on the idea that, given an atomic formula $p \in At$, representing a basic classification property and a labelled vector space $(X,v)$, the modeller may already have intuitions about which subset of $v(p)$ should be robustly classified as $p$. In many applications, each feature vector corresponds to a concrete object, allowing the modeller to identify clear instances of a class that should remain invariant under small perturbations, thereby providing stronger supervision to the classifier, e.g., by providing extra classificatory labels for the robust objects. Such human-centred classification should only ensure nesting between simple and robust classifications.

We denote by $\ro(p) \subseteq v(p)$ the set of such robust $p$-vectors. Semantically, this reflects the requirement that small perturbations of the vectors representing the objects in $\ro(p)$ do not change their classification. In contrast, vectors outside $\ro(p)$ may lie near the boundary of the classification region and be more sensitive to perturbations. From a model-theoretic perspective, this corresponds to requiring that $\ro(p) \subseteq \mathrm{int}(v(p))$, as specified in the previous paragraph.

In practice, however, the robust region associated with a class is not limited to the explicitly identified examples, but includes all vectors in a neighbourhood of each robust instance. Since these neighbourhoods are typically high-dimensional and continuous, they cannot be exhaustively specified. To address this, we adopt a multi-label heuristic: the modeller labels certain training instances as robust (e.g., robust $p$) and others as standard $p$ instances. A classifier trained on this enriched dataset learns two nested regions—a broader region corresponding to $v(p)$ and a smaller core approximating $\ro(p)$.

For example, in Figure ~\ref{fig:MNIST8}, the modeller may specify that $s_1, s_2$, and $s_3$ are robustly classified as eights. At the same time, $s_4$ and $s_5$ are labelled as standard eighths, reflecting the modeller's ability to distinguish clear instances from ambiguous ones, without requiring precise knowledge of the entire robust region. This approach, inspired by adversarial training in machine learning \cite{Madry18}, provides a practical approximation of robust classification sets while avoiding the need to specify them exhaustively.

Once these robust subsets have been identified, they can serve as a basis for constructing the model's topological structure.

\begin{figure}[h]
    \centering
    \includegraphics[scale=0.3]{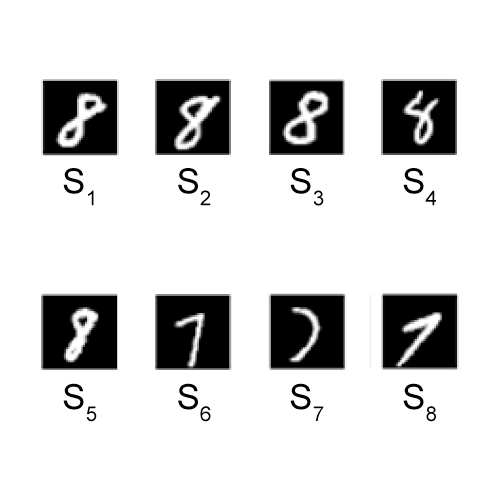}
    \caption{MNIST example of the digits 7 and 8.}
    \label{fig:MNIST8}
\end{figure}

\subsection{Second step: topology extraction}
Given a labelled vector space $(X,v)$, and a function $r: At \to P(X)$ with $\ro(p) \subseteq v(p)$ (as constructed through the previous heuristic), one can define multiple classification models for which a given set $\ro(p) \subseteq int(v(p))$. For example, we could always choose a discrete robust classification space $P(X)$ on $X$, because then the classification set of every formula $\varphi$ would be equivalent to the classification set of $\R(\varphi)$, i.e., the robust classification set. So in general, there exists a whole class of such models.

The construction in this section is mathematically straightforward: given a specified family of robust atomic regions, it generates the smallest topology containing them. Its purpose is not to introduce a new topological construction, but to give a modelling principle for robust classification. If a modeller specifies which atomic classifications count as robust (e.g., by checking the validation and test performance of an initial classification), the minimal robust model provides the least topological structure required to validate those robustness assumptions. In this sense, the construction avoids adding robustness beyond what is forced by the specified atomic robust regions and by closure under the topological operations.

\begin{definition}
Given a labelled vector space $\mathcal{D}:=(X,v)$ and a function $\ro:At\to \mathcal{P}(X)$ with $\ro(p)\subseteq v(p)$, we write $\mathbb{M}^{\mathcal{D},\ro}$
for the class of all models $M=(X,\Tau,v)$ such that, for every $p\in At$, $\ro(p)\in \Tau$.\footnote{Since $\ro(p)\in\Tau$ and $\ro(p)\subseteq v(p)$, it follows that
$\ro(p)\subseteq int(v(p))$.}
\end{definition}

Given the interpretation of a formula $\R(\varphi)$ in a model, we see that the models in $\mathbb{M}^{\mathcal{D},\ro}$ agree on the robust classification of $p$ in $\ro(p)$. Given that the discrete robust classification model $(X, P(X), v)$ is always part of $\mathbb{M}^{\mathcal{D},\ro}$, we know that there always exists at least one robust classification model for any given $\mathcal{D}$ and $r$. The problem with the discrete robust classification space is that the model using it considers every formula as robustly classified, since $\R(\varphi)$ and $\varphi$ collapse in such a model.

We want to work with a robust classification model that uses the fewest robust classification sets while still treating every feature vector in $\ro(p)$ as a robust $p$-vector. To find such a model, we order the models in $\mathbb{M}^{\mathcal{D},\ro}$ via the following pre-order, for which we write $\M_1 \leq \M_2$ if $\M_1$ considers fewer formulas robust than $\M_2$. More formally:

\begin{definition}\label{def:ordering}
 Given a labelled vector space $(X,v)$. We call a model $\M_1=(X, \Tau_1, v)$ at least as strict as a model  $\M_2=(X, \Tau_2, v)$, written as $\M_1 \leq \M_2$ iff for every propositional formula $\varphi$ and every state $s \in X$, if $\M_1, s \models \R(\varphi)$, then $\M_2, s \models \R(\varphi)$.
\end{definition}

The question that naturally arises is whether there exists a smallest (strictest) model in $\mathbb{M}^{\mathcal{D},\ro}$. Hence, we are going to show that such a smallest model $\M=(X, \Tau, v)$ always exists within $\mathbb{M}^{\D, \ro}$. Meaning that for any model $\M'=(X, \Tau', v) \in \mathbb{M}^{\D, \ro}$ we have $\M \leq \M'$.

\begin{theorem}
    There exists a smallest model $\M=(X, \Tau, v)$ in $\mathbb{M}^{\D, r}$ with respect to the relation $\leq$ of Definition \ref{def:ordering}. 
\end{theorem}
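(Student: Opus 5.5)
The plan is to take $\Tau$ to be the topology generated by the family $\{\ro(p) : p \in At\}$, i.e.\ the smallest topology on $X$ containing every $\ro(p)$, and to set $\M := (X,\Tau,v)$. We will argue in three steps: existence of this topology, membership of $\M$ in $\mathbb{M}^{\D,\ro}$, and minimality with respect to $\leq$.

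\textbf{Existence.} Define $\Tau$ as the intersection of all topologies $\Tau'$ on $X$ with $\ro(p) \in \Tau'$ for every $p$. This family is non-empty, since it contains the discrete space $\mathcal{P}(X)$. An intersection of topologies is again a topology, because conditions (i)--(iii) of a robust classification space are each preserved under intersections. Hence $\Tau$ is a robust classification space with $\ro(p) \in \Tau$ for every $p$, and $\M \in \mathbb{M}^{\D,\ro}$. Concretely, $\Tau$ consists of $\emptyset$, $X$, and arbitrary unions of finite intersections of the sets $\ro(p)$. This description is not needed for the argument but makes the model explicit.

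\textbf{Minimality.} Let $\M' = (X,\Tau',v) \in \mathbb{M}^{\D,\ro}$. By construction $\Tau$ is one of the terms in the intersection defining it relative to $\Tau'$, so $\Tau \subseteq \Tau'$. The key observation is that for propositional $\varphi$ (no $\R$, no $\brd$), the truth set $||\varphi||^{\M} = ||\varphi||^{\M'}$ depends only on $v$. This follows by a trivial induction on propositional formulas, since the clauses for atoms, $\neg$ and $\wedge$ make no use of the topology. Call this common set $A$. Then
\[
\mathrm{int}_{\Tau}(A) = \bigcup\{O \in \Tau : O \subseteq A\} \subseteq \bigcup\{O \in \Tau' : O \subseteq A\} = \mathrm{int}_{\Tau'}(A),
\]
so $\M, s \models \R\varphi$ implies $\M', s \models \R\varphi$. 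This gives $\M \leq \M'$.

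The only delicate point is that Definition~\ref{def:ordering} quantifies over propositional formulas only. If it had allowed nested modalities, the truth sets of the argument of $\R$ would differ between the two models, and monotonicity of the interior would not immediately apply. I expect the main obstacle to be checking that the restriction to propositional $\varphi$ is indeed what makes the truth sets coincide, so that the comparison reduces to the inclusion $\Tau \subseteq \Tau'$. Beyond that, the proof is routine.
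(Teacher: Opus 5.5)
Your proof is correct and follows the paper's argument. Both take $\Tau$ to be the topology generated by $\{\ro(p) : p \in At\}$, use $\Tau \subseteq \Tau'$ for every $\M'$ in the class, and rely on propositional truth sets being fixed by $v$ alone; defining $\Tau$ as an intersection of topologies rather than via the sub-base simply makes the inclusion $\Tau \subseteq \Tau'$ immediate, where the paper only asserts it. One wording slip: it is $\Tau'$, not $\Tau$, that appears among the topologies being intersected, and that is what gives $\Tau \subseteq \Tau'$.
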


\begin{proof}[Proof sketch]
We define the robust classification space $\Tau$ of the smallest model $\M$ as the one built from $\ro(p)$. This means the set $\mathcal{S}:=\{ \ro(p): p \in At\}$ serves as the sub-base of $\Tau$. As such, a set $O \subseteq X$ is an element of $\Tau$ iff it can be written as the union of finite intersections of elements in $\mathcal{S}$.\footnote{The empty intersection is defined as the whole set $X$.} By definition, every element in $\mathcal{S}$ is an element of the robust classification space. Hence we get $\ro(p) \in \Tau$ which implies that $\ro(p)$ is a robust classification set contained in $v(p)$, making $\ro(p)$ subset of $\mathrm{int}(v(p))$ and therefore $(X, \Tau, v) \in \mathbb{M}^{\D, r}$. Furthermore, $\Tau$ is the smallest robust classification space containing the sets in $\mathcal{S}$, hence every robust classification set in $\Tau$ is contained in every robust classification space of a model in $(X, \Tau, v) \in \mathbb{M}^{\D, r}$. As a consequence, given a state $s \in X$ and a propositional formula $\varphi$, if  $(X, \Tau, v), s \models \R(\varphi)$ holds, then there exists a robust classification set $O \in \Tau$ such that $s \in O \subseteq ||\varphi||^\M$. Since $\varphi$ is propositional, its truth set is determined only by the valuation $v$, and therefore $||\varphi||^\M = ||\varphi||^{\M'}$. Since $\Tau$ is the smallest robust classification space containing $\mathcal{S}$ we can conclude that $O$ is an element of every robust classification space in the class $\mathbb{M}^{\D, r}$ giving us $\M', s \models \R(\varphi)$ for any model $\M' \in \mathbb{M}^{\D, r}$.
\end{proof}

In summary, our methodology combines a data‑driven heuristic that uses multi‑labelling and training to approximate robust subsets with a mathematical construction that formalises these subsets as open sets in a minimal topology. The heuristic addresses the practical difficulty of identifying robust regions. At the same time, the mathematical procedure guarantees that the formal semantics align with the modeller's robustness assumptions without imposing unnecessary robustness on other instances.

\section{Examples}\label{sec: example}

In this section, we will discuss some examples to explain the various parts of our logic.

In this first example, we will model a simple image classification scenario. The model is meant to try to classify numbers. Through \logic, it is possible to represent various properties of the classifications being made. The formulas considered in the example will be: $8=$ ``The image was classified as an 8''; $7=$ ``The image was classified as a 7''; $2\text{-}circles=$ ``The image was classified as having two clear circles''; $StraightLine=$ ``The image was classified as containing a straight line''.\footnote{Note that given the nature of the example, the classification categories for the numbers are mutually exclusive, but this is not a pre-requisite for our logic, as can be seen with the properties $2\text{-}circles$ and $StraightLine$.}

The model will receive eight images as input, as shown in Figure \ref{fig:MNIST8}. The feature vectors representing those images are indicated with $s_i$ with $i \in \{1, \dots, 8\}$ as shown in the figure. We also assume that a human has already selected the robust classification sets during image preprocessing. Specifically, the following classifications were given: $v(8)=\{s_1, s_2, s_3, s_4, s_5\}$; $v(7)=\{s_6, s_7, s_8\}$; $v(2\text{-}circles)=\{s_1, s_2, s_3\}$; $v(StraightLine)=\{s_1, s_2, s_6, s_8\}$.

For the robust classification sets, there is only one which relates to the classification of the 8s. In particular, we assume that $\ro(8)=\{s_1, s_2, s_3\}$, which means that the robust classification space will be\footnote{Obviously, the more robust classification sets are included in the formalisation, the richer the robust classification space will be, but to keep the example simple, we chose to specify just one over and above the empty and whole set.}:
\[
\Tau=\{
\emptyset,
\{s_1, s_2, s_3\},
\{s_1, s_2, s_3, s_4, s_5, s_6, s_7, s_8\}
\}
\]
Note that the interior of $8$ is $int(8)=\ro(8)$, as expected. Moreover, although the modeller might not have specified it, it turns out that the interior of $2\text{-}circles$ is also $int(2\text{-}circles)=\{s_1, s_2,s_3\}=v(2\text{-}circles)$, which means that all the classifications of images indicating 2-circles are robust. Finally, both $7$ and $StraightLine$ have an empty interior, meaning that it is not possible to provide a robust classification boundary for those properties. In this setting, $8\rightarrow 2\text{-}circles$ is not true at every state of the model (it does not hold, e.g., in $s_4$), but $8\brd 2\text{-}circles$ is, which follows from the fact that borderline classification cases of images of 8s do not have the property of having two circles. However, when only robust classifications are considered, such a property emerges, which means that if we can establish the robustness of an 8-classification, we are guaranteed to obtain the 2-circles property.

We will now highlight what happens when the satisfiability conditions for $\varphi\brd \psi$ fail to hold in a model. First, we show what it means that the interior of the antecedent $\varphi$ is empty. To show this, we concentrate on the images of 7s and the formula $StraightLine$. In this case, the interior of $7$ is $\emptyset$, i.e., for any image classified as a 7, even slight changes in the feature vector representing it could drastically alter its classification. In this scenario, although we expect images of 7s to contain straight lines, it is reasonable to expect that the classification model cannot learn any meaningful connections between 7s and straight lines, since all available inputs are poor images. Thus, the formula $7\brd StraightLine$ does not hold.

Finally, we will show what happens when the consequent of a robust implication fails to hold within the boundaries of the interior of the antecedent. To do so, note that although $8$ has an interior, it is not a subset of $StraightLine$, since image $s_3$, which is robustly classified as an 8, does not contain a straight line, implying the failure of $8\brd StraightLine$, since, as we expect, having a robust classification of an 8 should not guarantee that the image also contains a straight line.

\section{Conclusion and Future Works}
We introduced a logical framework for reasoning about the robustness of machine learning classifications using topological semantics. We included in the framework a conditional connective that can be used to derive new information from features that are known to be robust. Our formal framework has the potential to improve understanding of robustness in ML by highlighting logical connections between classification properties. In the future, we aim to extend this work in the following directions: first, we would like to study the computational complexity of the model-checking and satisfiability problems to gauge how feasible an implementation of the logic would be; second, we would like to study the minimal robust models proposed in section \ref{section: robust models} in more detail, by investigating the logic behind the robustness of certain features and their effect on the axiomatisation proposed in section \ref{sec:axio}; third, we would like to add a measure over the topology to allow reasoning about \textit{how much} a classification is robust; fourth, we plan to integrate trust evaluations and probability values within the formalism, similar to how it has been done in \cite{Aldini2021,Tagliaferri24,TagliaferriAldini2022} and following the footsteps of \cite{Hornischer2026}.

\paragraph{Acknowledgements:} Work supported by the FWF project LoDEx Grant-DOI 10.55776/I6372 and the FWF project Exploring Conditional Logics via Proof Theory Grant-DOI 10.55776/PAT1815025.

\nocite{*}
\bibliographystyle{eptcs}
\bibliography{generic}

\end{document}